\documentclass[letterpaper, 10 pt, conference]{ieeeconf}

\IEEEoverridecommandlockouts   

\usepackage{cite}
\usepackage{amsmath,amssymb,amsfonts}
\usepackage{algorithmic}
\usepackage{subfig}
\usepackage{graphicx}
\usepackage{comment}
\usepackage{textcomp}
\usepackage{mathtools}
\def\BibTeX{{\rm B\kern-.05em{\sc i\kern-.025em b}\kern-.08em
    T\kern-.1667em\lower.7ex\hbox{E}\kern-.125emX}}
\usepackage{calrsfs}

\usepackage[colorlinks=true]{hyperref}
\usepackage{algorithm}

\newtheorem{theorem}{Theorem}[section]

\newtheorem{definition}[theorem]{Definition}

\newtheorem{problem}[theorem]{Problem}

\newtheorem{remark}[theorem]{Remark}

\newtheorem{assumptions}[theorem]{Assumptions}

\DeclareMathAlphabet{\pazocal}{OMS}{zplm}{m}{n}

\newcommand{\mb}{\mathbb}

\renewcommand{\theequation}{\thesection.\arabic{equation}}
\renewcommand{\thefigure}{\thesection.\arabic{figure}}
\renewcommand{\thetable}{\thesection.\arabic{table}}

\newcommand{\sr}{\stackrel}

\newcommand{\rar}{\rightarrow}

\newcommand{\tri}{\sr{\triangle}{=}}

\newcommand{\bea}{\begin{eqnarray}}
\newcommand{\eea}{\end{eqnarray}}
\newcommand{\bes}{\begin{eqnarray*}}
\newcommand{\ees}{\end{eqnarray*}}
\newcommand{\bce}{\begin{center}}
\newcommand{\ece}{\end{center}}

\def\VR{\kern-\arraycolsep\strut\vrule &\kern-\arraycolsep}
\def\vr{\kern-\arraycolsep & \kern-\arraycolsep}

\newcommand{\ben}{\begin{enumerate}}
\newcommand{\een}{\end{enumerate}}

\newcommand{\bi}{\begin{itemize}}
\newcommand{\ei}{\end{itemize}}

\newcommand{\bp}{\begin{problem}}
\newcommand{\ep}{\end{problem}}
\newcommand{\hso}{\hspace{.1in}}
\newcommand{\hst}{\hspace{.2in}}

\newcommand{\bc}{\begin{center}}
\newcommand{\ec}{\end{center}}

\begin{document}

\title{\LARGE \bf Information-Theoretic Meta Dynamic Programming for Signalling and Control of POMDPs}

%% Bambos suggestion %%
%Meta Dynamic Programming  for Signalling and  Control of  POMDPs}

\author{Charalambos D. Charalambous, Stelios Louka, and Photios A. Stavrou 
%\thanks{``\IT{This work is funded by ... }'' }
\thanks{C. D. Charalambous and S. Louka are with the Department of Electrical and Computer Engineering, University of Cyprus, Nicosia 1678, Cyprus. {\tt\small email: \{chadcha,\ louka.stelios\}@ucy.ac.cy}  }
\thanks{P. A. Stavrou is with the Foundations and Algorithms Group, Communication Systems Department, EURECOM, France. {\tt\small email: fotios.stavrou@eurecom.fr}}
}

\maketitle

%\vspace*{-1.cm}

\begin{abstract}
In this paper, we study the information-theoretic characterization of simultaneous signalling and control over channels modeled by partially observable Markov decision processes (POMDPs). The problem is formulated as an optimization over randomized control strategies that maximize the directed information from actions to observations, subject to an average-cost constraint.  We derive a novel dynamic programming framework in which the state is defined on the space of conditional probability distributions, leading to a high-level ``meta'' dynamic program. Specifically, we show that two coupled information states, namely, the posterior distribution of the system state and a distribution over such posteriors, satisfy Markov recursions and provide sufficient statistics for optimal control. This structure enables the decomposition of optimal strategies into separated randomized policies that depend only on these information states. Our results establish necessary and sufficient conditions for optimality and unify classical stochastic control and information-theoretic formulations. In particular, we show that in the absence of signalling, the proposed framework reduces to the standard dynamic programming equations for POMDPs. The developed approach provides a principled foundation for analyzing and designing control systems with intrinsic information constraints.

\end{abstract}

\section{Introduction}
\label{sect:intro}

Stochastic optimal control and Markov decision theory is traditionally focused on optimizing performance criteria of dynamical systems under uncertainty, where control strategies are designed based on available information \cite{striebel1965,gihman-skorohod1979,kumar-varayia1986,caines1988,hernandezlerma-lasserre1996,petersen-james-dupuis:2000}. In partially observable settings, such as partially observable Markov decision processes (POMDPs), the controller does not have direct access to the system state and instead relies on noisy observations. A fundamental tool in this context is the notion of the information state (or belief state), which evolves according to a Markov recursion and serves as a sufficient statistic for optimal control \cite{striebel1965,kumar-varayia1986,charalambous-elliott1997}.

In classical stochastic control with symmetric (classical) information structures, it is well known that randomized control strategies do not improve performance compared to deterministic ones \cite{gihman-skorohod1979,hernandezlerma-lasserre1996}. In contrast, information theory demonstrates that randomization is essential for reliable communication over noisy channels, where encoding strategies must be stochastic to achieve capacity \cite{shannon1948,gallager1968,cover-thomas2006,ihara1993,han2003}. This contrast highlights a fundamental gap between control and communication: while control seeks to regulate system behavior, communication aims to transmit information reliably.

Motivated by emerging applications in networked and cyber-physical systems, where control and communication are inherently intertwined, this paper studies the problem of simultaneous signalling and control when the noisy channels are modeled by POMDPs. In such systems, control actions play a dual role: they influence the system dynamics and convey information which is encoded into the strategies. This perspective is closely related to recent developments in control under communication constraints and information theory for channels with feedback \cite{kourtellaris-charalambousIT2015_Part_1,charalambous-kourtellaris-loykaIT2015_Part_2,charalambous-kourtellaris-tzortzis:SICON2020,charalambous-kourtellaris-tzortzis:IEEEAC2017,charalambous-kourtellaris-tziortzis:SICON-2024,chen-berger2005,tatikonda-mitter2009,permuter-cuff-roy-weissman2010,permuter-asnani-weissman2013,stavrou-charalambous-kourtellarisIT2016}. It naturally leads to an information-theoretic formulation of control, in which the objective is not only to achieve optimal  control performance but also to maximize information transfer.

To formalize this trade-off, we adopt the concept of control-coding (CC) capacity, introduced in prior works on stochastic control systems and channels with feedback \cite{kourtellaris-charalambousIT2015_Part_1,charalambous-kourtellaris-loykaIT2015_Part_2,charalambous-kourtellaris-tzortzis:SICON2020,charalambous-kourtellaris-tzortzis:IEEEAC2017,charalambous-kourtellaris-tziortzis:SICON-2024,charalambous-louka:2025BC}. The CC capacity represents the maximum achievable signalling rate subject to an average cost constraint. This formulation embeds two competing objectives: (i) controlling the system using partial observations while satisfying a performance constraint, and (ii) maximizing the information conveyed through the system dynamics and observations, measured via directed information.

\subsection{Main Contributions}

The main contributions of this paper are as follows:
\begin{itemize}
\item We provide an equivalent information-theoretic formulation of the operational CC capacity for channels modeled by POMDPs as an optimization problem over randomized control strategies. The objective is expressed in terms of directed information \cite{stavrou-charalambous-kourtellarisIT2016}, subject to an average cost constraint (see Theorem 2.2). This establishes a rigorous bridge between stochastic control and information theory.
\item We show that the problem admits two key information states:
\begin{enumerate}
\item[(i)] the classical posterior distribution of the system state given past actions and observations (as in nonlinear filtering \cite{striebel1965,charalambous-elliott1997});
\item[(ii)] a higher-order distribution defined on the space of posterior distributions.
\end{enumerate}
We prove that both evolve according to Markov recursions and constitute sufficient statistics for optimal signalling and control strategies (see Theorems 3.2 and 3.3).
\item Using the above information states, we derive a novel higher level dynamic programming framework on the space of probability measures, extending classical dynamic programming for POMDPs \cite{kumar-varayia1986,petersen-james-dupuis:2000}. This “meta” dynamic program characterizes the optimal strategies and provides a structured way to analyze the underlying optimization problem (see Theorem 3.6).
\item We establish that optimal randomized strategies admit a separation structure, depending only on the identified information states. We further derive necessary and sufficient conditions for optimality via the proposed dynamic programming equations (see Theorems 3.6 and 3.7).
\item We show that, in the absence of signalling, the proposed framework reduces to the classical dynamic programming formulation for POMDPs [1], [3]. This demonstrates that our results recover standard stochastic control as a special case (see discussion following Theorem 2.2 and Section III).
\end{itemize}

The remainder of the paper is organized as follows. In Section II, we formulate the signalling and control problem for channels modeled as POMDPs, and introduce the notion of CC capacity. In Section III, we develop the meta dynamic programming framework and characterize the associated information states and optimal strategies. Finally, conclusions and future research directions are provided in Section IV.

\begin{figure*}
\centering
\includegraphics[width=0.9\textwidth]{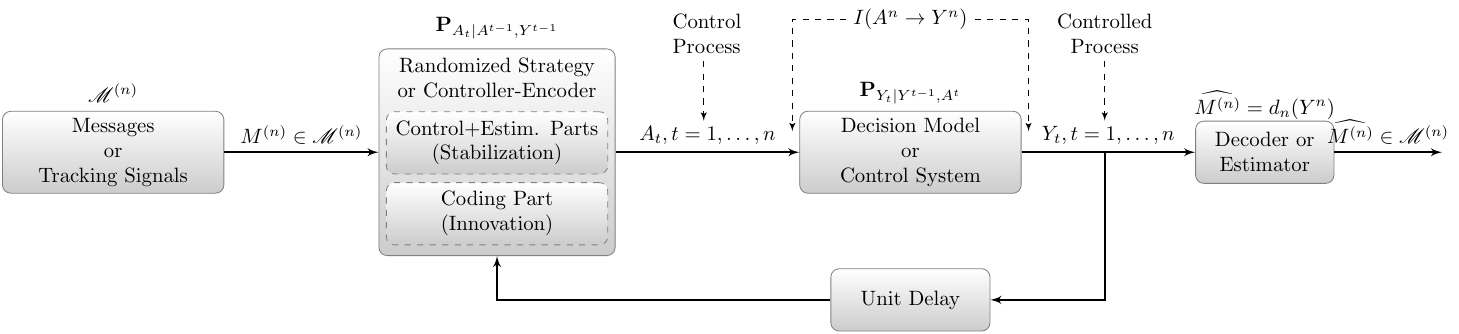}
\caption[]{\emph{Signalling and Control  Block Diagram: The equivalent channel is ${\bf P}_{Y_t|Y^{t-1}, A^i}, t=1, \ldots, n$.}}\label{fig_2}
\end{figure*}

%
% \begin{figure}
%\center
%\includegraphics[scale=0.6]{Figure_Encoder-Decoder.pdf}
%\label{fig1:dual}
%\caption{Communication block diagram and its analogy to stochastic optimal control.}
%\end{figure}  

\section{POMDPs, Signalling, and Control-Coding Capacity}\label{sec:problem.formulation}

In this section, we introduce the POMDP, the signalling and control protocol, and the characterization of control-coding (CC) capacity. We first present the system model and operational definition, then define CC capacity, and finally provide its equivalent information-theoretic formulation and its relation to classical stochastic control.

\subsection{Notation}

$\mathbb{R} \triangleq (-\infty,\infty)$, $\mathbb{Z}_+ \tri \{1,2, \ldots\}$, {$\mathbb{Z}_+^n \triangleq \{1,2, \ldots,n\}$,}  $n \in {\mathbb Z}_+$. $\{({\mathbb X}_t,{\cal B}({\mathbb X }_t))\big| t\in\mathbb{Z}_+^n\}$ denotes measurable spaces, where ${\mathbb X}_t$ is confined to complete separable metric space or Polish space, and ${\cal B}({\mathbb X}_t)$ is the Borel $\sigma-$algebra of subsets of ${\mathbb X}_t,  \forall t\in\mathbb{Z}_+^n$. Points in the product space ${\mathbb X}^n\triangleq{{\prod}_{t\in\mathbb{Z}_+^n}}{\mathbb X}_t$ are denoted by $x^n\triangleq \{x_1,\ldots, x_n\} \in{\mathbb X}^n$. $\big(\Omega, {\cal F}, {\mathbb P}\big)$   denotes a probability space.  Given a random variable (RV)  $X: (\Omega, {\cal F}) \rar  ({\mathbb X}, {\cal B}({\mathbb X}))$,  we denote  by\footnote{The subscript is omitted, if it is clear from the arguments of the measure.} ${\mathbb P}\big\{X \in dx\big\}={\bf P}_{X}(dx)\equiv {\bf P}(dx)$ the probability measure (PM) induced by $X$ on $({\mathbb X}, {\cal B}({\mathbb X}))$. ${\cal M}({\mathbb X})$ denotes the set of all probability measures on ${\mathbb X}$. 
\par Given another  RV, $Y: (\Omega, {\cal F}) \rar  ({\mathbb Y}, {\cal B}({\mathbb Y}))$ we define    the conditional PM  of the  RV $Y$ conditioned on   $X$ by $
{\bf P}_{Y|X}(dy| X)\tri {\mathbb P}\big\{Y \in dy\big|X\big\}
  \equiv {\bf P}(dy|X)$, 
where $X$ is replaced by $x$ if     $X=x$. 
Conditional PMs are identified by stochastic kernels (mappings), $K:  {\cal B}({\mathbb Y }) \times {\mathbb X} \rightarrow [0,1]$ satisfying the following two properties \cite{dupuis-ellis1997}: \\
1) For every $x \in {\mathbb X}$, the set function $K(\cdot|x)$ is a probability measure
on $({\mathbb Y}, {\cal B}({\mathbb Y}))$, i.e.,  $K(\cdot|x)\in {\cal M}({\mathbb Y})$. \\
2) For every $A \in {\cal B}({\mathbb X})$, the function $K(A|\cdot)$ is
${\cal B}({\mathbb X})$-measurable. \\The set of such  maps $K(\cdot|\cdot)$ is  denoted by  ${\cal K}({\mathbb Y}|{\mathbb X})$.

  Given    three RVs,  $X: \Omega \rar {\mathbb X}$,  $Y: \Omega \rar {\mathbb Y}$, and $Z: \Omega \rar {\mathbb W}$, we 
   % defined on some probability space $(\Omega, {\cal F}, {\mathbb P})$. 
   say that RVs $(X, Y)$ are conditionally independent (CI) given RV $Z$ if  and only if  ${\bf P}_{X, Y|Z}={\bf P}_{X|Z}{\bf P}_{Y|Z}-\emph{a.s.}$,  equivalently,   ${\bf P}_{Y|Z,X}={\bf P}_{Y|Z}-\emph{a.s.}$; we denote the CI  by the Markov Chain (MC),  $X \leftrightarrow Z \leftrightarrow Y$.   %The specification $\emph{a.s.}$  (almost surely) is often omitted. 

The   mutual information between RVs $X$ and $Y$ conditioned on RV $Z$ is denoted by $I(X;  Y|Z)$ and  is defined by \cite{cover-thomas2006}
\begin{align}
 I(X;  Y|Z)  \tri 
   {\bf E} \Big\{  \log \Big(\frac{ d{\bf P}_{Y|X, Z}(\cdot|X,Z)}{d {\bf P}_{Y|Z}(\cdot|Z)}(Y)\Big)    \Big\} \label{cmi}
%% =
%   {\bf E} \Big\{  \log \Big(\frac{ d{\bf P}_{Y|X, Z}(\cdot|X,Z)}{d {\bf P}_{Y|Z}(\cdot|Z)} (Y)\Big)  \Big\}\in [0,\infty] \nonumber \\
%   &=  \int \log \Big(\frac{ d{\bf P}_{Y|X, Z}(\cdot|x,z)}{d {\bf P}_{Y|Z}(\cdot|z)}(y)\Big)  {\bf P}_{Y|X, Z}(dy|x,z) \otimes {\bf P}_{X, Z}(dx,dz). \label{cmi}
\end{align}
where the expectation is w.r.t. ${\bf P}_{X, Y, Z}(dx,dy,dz)$.
%= {\bf P}_{Y|X, Z}(dy|x,z)  {\bf P}_{X, Z}(dx,dz)$. 
By  \cite{pinsker1964,ihara1993},  $I(X;  Y|Z)\geq 0$ and $I(X;  Y|Z)= 0$ if and only if  ${\bf P}_{Y|X, Z}={\bf P}_{Y|Z}-$a.s.
The mutual  information between RVs $X$ and $Y$  is denoted by $I(X;  Y)$ and is a special case of  (\ref{cmi}), with $Z$ absent. The  entropy (differential) of  a RV $X$ with probability density function  $f_X(x)$ is  $H(X)\tri {\bf E}\big\{-\log \big(f_X(X)\big)\big\}=  -\int \log \big(f_X(x)\big) f_X(x) dx$.

\subsection{Signalling and Control in POMDPs}\label{sec:2.b}

Consider a partially observable Markov decision process (POMDP) defined over a finite horizon $t \in T_{+}^n\tri \{1,2,\ldots,n\}$:
\begin{align}
\{X_t, Y_t, A_t\}_{t=1}^n,\nonumber
\end{align}
where
\begin{itemize}
\item $X_t$ is the (unobservable) state variable,
\item $Y_t$ is the observation variable,
\item $A_t$ is the control (or action) variable.
\end{itemize}
The partially observable system is specified by stochastic kernels:
\begin{align}
\mathbb{P}(X_{t+1}|X_t,A_t) &= S_{t+1}(\cdot|x_t,a_t), \nonumber \\
\mathbb{P}(Y_t|X_t,A_t) &= Q_t(\cdot|x_t,a_t),\nonumber
\end{align}
and is penalized by a non-negative  measurable cost function $c_n: {\mathbb A}^n \times {\mathbb Y}^n \rar [0,\infty)$ which is additive
\begin{align}
c_n(a^n,x^n) = \sum_{t=1}^n \gamma_t(x_t,a_t).\nonumber
\end{align}

Next, we introduce the operational signalling and control protocol-a variation of  Shannon's operational channel capacity \cite{cover-thomas2006},  with the encoder replaced by a controller-encoder, as demonstrated in  Fig.~\ref{fig_2}.

\begin{definition}(Operational Signalling and Control of POMPDs)\label{def:code}  
Given the POMDP described in Section \ref{sec:2.b}, the signalling and control protocols shown in Fig.~\ref{fig_2} consist of:
\begin{itemize}
\item[(a)] A set of uniformly distributed messages $M^{(n)} : \Omega \rar  {\cal M}^{(n)}\tri \{1,2,\ldots, |{\cal M}^{(n)}|\} $,  where  $|{\cal M}^{(n)}|=card({\cal M}^{(n)})$;

\item[(b)] A set of  controller-encoder  strategies $g^n(\cdot)=\{g_1(\cdot), g_2(\cdot), \ldots, g_n(\cdot)\}$ defined by\footnote{The superscript in ${\bf E}^g$  indicates that the corresponding distribution ${\bf P}= {\bf P}^g$  depends the strategy $g$.} 
\begin{align}
&{\cal E}_{n}(\kappa) \triangleq  \Bigg\{g_t: {\cal M}^{(n)}  \times   {\mathbb A}^{t-1} \times {\mb Y}^{t-1}  \rar {\mathbb A}_t ~\text{s.t.}\nonumber\\
& A_1=g_1(M^{(n)}), A_2=g_2(M^{(n)},A_1,Y_1),\nonumber\\
&\ldots, A_n=g_n(M^{(n)}, A^{n-1}, Y^{n-1})\Bigg|\nonumber \\
& \mbox{(\ref{NDM-6_a})-(\ref{NDM-6-mc}) hold,} \hso   \frac{1}{n}   {\bf E}^g    \big\{c_n  (A^n, X^n) \big\} \leq \kappa  \Bigg\},  \label{NCM-6}
\end{align}
where (\ref{NDM-6_a})-(\ref{NDM-6-mc}) are the MCs
\begin{align}
& {\bf P}_{Y_t|Y^{t-1}, A^{t}, M^{(n)}}= {\bf P}_{Y_t|Y^{t-1}, A^{t-1}}, \label{NDM-6_a}\\  
 &{\bf P}_{A_t|A^{t-1}, Y^{t-1}, X^t}= {\bf P}_{A_t|A^{t-1}, Y^{t-1}}.  \label{NDM-6-mc} % \Longleftrightarrow \mbox{(MC),} \label{NDM-6-mc} 
% &    \mbox{(MC).}  \;X^t \leftrightarrow (A^{t-1},Y^{t-1}) \leftrightarrow A_t, \: \forall t \in {\mathbb Z}_+^n.
% \label{NDM-6-mc-mc}
\end{align}
\item[(c)] Decoder function,  $y^n \longmapsto d_{n}(y^n)\in  {\cal M}^{(n)}$,  with average  error  probability ${\bf P}_{error}^{(n)} = \frac{1}{|{\cal M}^{(n)}|} \sum_{m^{(n)}=1}^{|{\cal M}^{(n)}|} {\mathbb  P}\big\{d_n(Y^n) \neq   M^{(n)} \big|M^{(n)}=m^{(n)}\big\}= \epsilon_n$.
\end{itemize}
The \textit{CC  capacity} or \textit{signalling rate} of the POMDP system is $r_n\triangleq \frac{1}{n} \log |{\cal M}^{(n)}|$ bits/second. A CC capacity $R$ bits/second is called achievable if there exists a code sequence such that $\lim_{n\longrightarrow\infty} {\bf P}_{error}^{(n)}=0$ and $\liminf_{n \longrightarrow\infty}\frac{1}{n}\log{|{\cal M}^{(n)}|}\geq R$. {\it The operational definition of   CC capacity}  is the supremum of all achievable CC rates,  defined by  
\[
C^{FB}(\kappa) \triangleq \sup\{R : R \text{ is achievable}\}.
\]
\end{definition}
\par Each message in the set 
$  {\cal M}^{(n)}\tri \{1,2,\ldots, |{\cal M}^{(n)}|\} $ is  a  string of $\log |{\cal M}^{(n)}|$ bits. In digital transmission,   real-valued source  processes, $M_t: \Omega \rar {\mathbb R}^{n_m}, t=1,2, \ldots$ are first  quantized, and then represented by elements of  $  {\cal M}^{(n)}$.

\subsection{Information-Theoretic Characterization of $C^{FB}(\kappa)$}

In section, we demonstrate that the operational $C_{FB}(\kappa)$  (given in  Definition~\ref{def:code}), admits an equivalent information theoretic optimization problem, independently of $(g^n(\cdot)$, $d_n(\cdot), M^{(n)})$, and the constraint   $\lim_{n\longrightarrow\infty} {\bf P}_{error}^{(n)}=0$. 

{\it   Conditional Probability Measures (PMs) or Randomized Strategies.} Given the POMDP described in Section \ref{sec:2.b}, we introduce the larger set of conditional PMs or randomized strategies
\begin{align}
&{\cal P}_{n}(\kappa) \tri \Big\{ {\bf P}_{A_t|A^{t-1}=a^{t-1}, Y^{t-1}=y^{t-1}}={P}_t(da_t\big| a^{t-1},  y^{t-1}),  \nonumber \\
 &\hst \forall t \in \{1,2,\ldots,n\}\Big|\mbox{(\ref{NDM-6-mc}) }  \:\frac{1}{n} {\bf E}^{ P}\big\{ c_n(X^n,A^n)\big\} \leq \kappa   \Big\}.  \nonumber%\label{ci_ci}    
\end{align} 
\par The next theorem gives the information-theoretic characterization of the CC capacity  $C_{FB}(\kappa)$.

\begin{theorem} (Equivalent Information-Theoretic Characterization of $C_{FB}(\kappa)$) \\
\label{thm_capa}
Consider the POMDP described in Section \ref{sec:2.b}, and Definition~\ref{def:code}.
Moreover, given  $g \in {\cal E}_{n}(\kappa)$,  define the mutual information between $M^{(n)}$ and $Y^n$ by 
\begin{align}
I^g(M^{(n)} ; Y^n)  \tri  
  {\bf E}^g \Big\{  \log \Big(\frac{ {\bf P}_{Y^n| M^{(n)} }^g(dY^{n}\big| M^{(n)} )}{{\bf P}_{Y^n}^{g}(dY^{n})}\Big)    \Big\}.
\nonumber% \label{mi}
\end{align}
 For any   $P(\cdot|\cdot) \in {\cal P}_{n}(\kappa)$,  define the directed  information from  $A^n$ to $Y^n$   by 
\begin{align}
&I^P(A^n \rar  Y^n)  \tri   \sum_{t=1}^nI^P(A^t;Y_t|Y^{t-1}) \nonumber\\%\label{di_general_aa} \\
 &= \sum_{t=1}^n{\bf E}^P \Big\{  \log \Big(\frac{ {\bf P}_{Y_t|Y^{t-1}, A^t}^P(dY_t|Y^{t-1}, A^t)}{{\bf P}_{Y_t|Y^{t-1}}^{P}(dY_t|Y^{t-1})}\Big)    \Big\}
 \label{di_general}
\end{align}
where the conditional PMs,   ${\bf P}_{Y_t|Y^{t-1}, A^t}^P $ and ${\bf P}_{Y_t|Y^{t-1}}^P $, for any $t$ are obtained as follows  
\begin{align}
&{\bf P}_{Y_t|Y^{t-1}, A^t}^P = \int_{{\mathbb X}_t} Q_t(dy_t|x_t, A_t)    {\bf P}_t^P(dx_t|Y^{t-1}, A^{t-1}),   \label{cd_c1_c1}  \\
&{\bf P}_{Y_t|Y^{t-1}}^P = \int_{{\mathbb A}^t  \times {\mathbb X}_t}   { Q}_t(dy_t\big| x_t, a_t)    { P}_t(da_t|a^{t-1}, Y^{t-1}) \nonumber \\
& .{\bf P}_t^P(dx_t\big|a^{t-1}, Y^{t-1})
  {\bf P}_{t-1}^P(da^{t-1}|Y^{t-1}),  \forall t \in {\mathbb Z}_+^n. \label{cd_o1}
\end{align}
Define the optimization problems,  
% $ {C}_{FB}(\kappa)$
%where   
\begin{align}
&{C}_{FB}(\kappa) \tri \lim_{n \longrightarrow \infty} \frac{1}{n} {C}_{FB,n}(\kappa),  \label{CCIS_8_BC_D_n_a}\\
& {C}_{FB,n}(\kappa)\tri  \sup_{{\cal P}_{n}(\kappa)}  I^P(A^n \rar Y^n)  \label{CCIS_8_BC_D_n}
\end{align}
provided the limit and supremum exist.\\ 
Then the following coding theorems hold.\\
\textbf{(1) Converse theorem:} 
 If there exists a  code with ${\bf P}_{error}^{(n)}= \epsilon_n$  such that  $\lim_{n\longrightarrow\infty} {\epsilon}_n=0$,
then 
\begin{align} 
 &R \leq  \liminf_{n \longrightarrow\infty}\frac{1}{n}\log |{\cal M}^{(n)}| 
% \label{rate_A-LCM_n} \\
{\leq} \:  \liminf_{n \longrightarrow\infty}  \sup_{  {\cal E}_{n}(\kappa)}  \frac{1}{n}  I^{g}(M;  Y^n) 
%\label{CCIS_6a_BC_n}  \\
\nonumber \\
&{\leq} \:  \liminf_{n \longrightarrow \infty}   \sup_{{\cal P}_{n}(\kappa)} \frac{1}{n}I^P(A^n\rar Y^n)=  {C}_{FB}(\kappa). \label{CCIS_8_BC_nn}
\end{align}
\textbf{(2) Achievability theorem:} Any  CC rate     $R< C_{FB}(\kappa) $ is achievable, provided the directed information density $\frac{1}{n} \sum_{t=1}^n\log \big(\frac{ {\bf P}_{Y_t|Y^{t-1}, A^t}^P(dY_t|Y^{t-1}, A^t)}{{ P}_{Y_t|Y^{t-1}}^{{\bf P}}(dY_t|Y^{t-1})}\big)$ and the cost  $\frac{1}{n}c_{n}(A^n,X^n)$ are stable in the sense of \cite{ihara1993}, i.e.,  we can guarantee converge in probability to their mean values, as  $n \longrightarrow \infty$. 
%of the converse and direct CC Theorems in \cite{charalambous-kourtellaris-tziortzis:SICON-2024} hold, 
% Assumptions~\ref{ass:ati} (end of Section~\ref{sect:as-lim}) hold. 
% ~\ref{thm:converse-POSS}, \ref{thm:direct-POSS} hold. 
\end{theorem}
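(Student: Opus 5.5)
The converse and achievability parts will be handled separately, following the standard route for channels with feedback \cite{tatikonda-mitter2009,permuter-cuff-roy-weissman2010,kourtellaris-charalambousIT2015_Part_1}. For the converse, I start from Fano's inequality. Since $M^{(n)}$ is uniform, we have
\begin{align}
\log|{\cal M}^{(n)}| = H(M^{(n)}) \leq I^g(M^{(n)};Y^n) + 1 + \epsilon_n \log|{\cal M}^{(n)}|. \nonumber
\end{align}
Dividing by $n$, letting $\epsilon_n\to 0$ and taking the supremum over ${\cal E}_n(\kappa)$ gives the first two inequalities in (\ref{CCIS_8_BC_nn}).

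For the third inequality, I expand the mutual information by the chain rule, $I^g(M^{(n)};Y^n)=\sum_{t=1}^n I^g(M^{(n)};Y_t|Y^{t-1})$. The key identity concerns the channel seen from the encoder. Since $A^t$ is a deterministic function of $(M^{(n)},Y^{t-1})$ through $g$, conditioning on $(Y^{t-1},M^{(n)})$ is the same as conditioning on $(Y^{t-1},A^t,M^{(n)})$. I then want to show that $Y_t$ is conditionally independent of $M^{(n)}$ given $(Y^{t-1},A^t)$. This follows from the POMDP kernels: $Y_t$ depends on $(X_t,A_t)$ through $Q_t$, and the posterior of $X_t$ given $(M^{(n)},A^{t-1},Y^{t-1})$ equals ${\bf P}_t(dx_t|Y^{t-1},A^{t-1})$. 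The latter holds by induction using (\ref{NDM-6-mc}) and the facts that $M^{(n)}$ enters only through the actions and that $X_1$ is independent of $M^{(n)}$. This yields (\ref{NDM-6_a}) and the formula (\ref{cd_c1_c1}), which do not depend on $g$ or $M^{(n)}$.

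With this identity in hand, $I^g(M^{(n)};Y_t|Y^{t-1})=I(A^t;Y_t|Y^{t-1})$, so $I^g(M^{(n)};Y^n)=I(A^n\rar Y^n)$ evaluated under the induced joint law. Next I marginalize out $M^{(n)}$. Every $g\in{\cal E}_n(\kappa)$ induces conditional laws ${\bf P}_{A_t|A^{t-1},Y^{t-1}}$, and these satisfy (\ref{NDM-6-mc}) and the same cost constraint, because the cost depends only on $(A^n,X^n)$, whose law is determined by these conditionals together with $S_t,Q_t$. Hence the induced element lies in ${\cal P}_n(\kappa)$, and the directed information it produces equals $I^g(M^{(n)};Y^n)$. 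Taking suprema then gives the last inequality, and $\liminf$ becomes $\lim$ under the assumption stated in (\ref{CCIS_8_BC_D_n_a}). I expect the main obstacle to be the measure-theoretic justification of the conditional-independence step on Polish spaces. My plan there is to write the joint law explicitly as a product of kernels and verify the Radon--Nikodym derivatives. A further delicate point is that the directed information depends on $P$ only through the joint law of $(A^n,Y^n)$, and this needs to be checked.

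For achievability, I fix $P\in{\cal P}_n(\kappa)$ that approaches the supremum and use a random code built on code-functions (Shannon strategies). I realize $P_t(da_t|a^{t-1},y^{t-1})$ as a deterministic map of $(M^{(n)},a^{t-1},y^{t-1})$ by drawing, independently for each message, a random function $U_t$ of the past with the appropriate law. This makes the equivalent channel from code-function sequences to $Y^n$ have information density equal to the directed information density. I then apply Feinstein's lemma in the information-spectrum form of \cite{ihara1993,han2003}. Stability of the information density gives an error probability that vanishes for any $R<\liminf \frac{1}{n}I^P(A^n\rar Y^n)$. Stability of $\frac{1}{n}c_n$ lets me expurgate codewords that violate the cost, giving cost at most $\kappa+\delta$ with vanishing loss in rate. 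Finally, continuity of $C_{FB}(\kappa)$ in $\kappa$, which follows from concavity (time-sharing), removes $\delta$.
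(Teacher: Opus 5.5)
Your proposal is correct and follows essentially the same route as the paper. The converse is Fano's inequality plus the chain rule, the functional dependence $A_t=g_t(M^{(n)},A^{t-1},Y^{t-1})$, the Markov chain $M^{(n)}\leftrightarrow(A^t,Y^{t-1})\leftrightarrow Y_t$, and the inclusion ${\cal E}_n(\kappa)\subseteq{\cal P}_n(\kappa)$ (you derive the Markov chains from the model where the paper builds them into the definition of ${\cal E}_n(\kappa)$, and you correctly get equality where the paper only invokes data processing), while your achievability sketch (code-functions, information-spectrum Feinstein bound, expurgation, continuity in $\kappa$) is a more detailed version of the paper's one-line appeal to random coding under information stability as in Ihara's Theorem~4.5.1.
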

\begin{proof} \textbf{(1) Converse theorem:} Suppose $R$ is achievable, and hence  by Definition~\ref{def:code},   $\lim_{n\longrightarrow\infty} {\epsilon}_n=0$, 
$\liminf_{n \longrightarrow\infty}\frac{1}{n}\log|{\cal M}^{(n)}|\geq R$. To prove the statements we use  Fano's inequality   \cite{cover-thomas2006}. Recall that,  for each $n$,  $M^{(n)}=m^{(n)} \in {\cal M}^{(n)}$ is uniformly distributed with probability $1/|{\cal M}^{(n)}|$.   Now, we use  the identity  $I^g(M^{(n)};Y^n)=H(M^{(n)})-H^g(M^{(n)}|Y^n)$,  where $H^g(\cdot)$ denotes entropy and $H^g(\cdot|Y^n)$ entropy conditioned on $Y^n$,   and     Fano's inequality   \cite{cover-thomas2006},  $H^g(M^{(n)}|Y^n) \leq h(\varepsilon_n) + \varepsilon_n \log |{\cal M}^{(n)}|$, where   $h(z) \tri -z\log z -(1-z)\log (1-z), z \in [0,1]$,  to obtain   the  following  inequalities. 
\begin{align}
\frac{1}{n}\log & |{\cal M}^{(n)}| = \frac{1}{n}H(M^{(n)}) =\frac{1}{n} H^g(M^{(n)}|Y^n) \nonumber\\
& +\frac{1}{n} I^g(M^{(n)};Y^n),  \hst  \forall  g^n(\cdot) \in {\cal E}_{n}(\kappa) \nonumber \\
\leq & \frac{1}{n} h(\varepsilon_n) + \frac{1}{n}\varepsilon_n \log |{\cal M}^{(n)}|\nonumber\\
& + \frac{1}{n}I^g(M^{(n)}; Y^n), \hso  \forall  g^n(\cdot) \in {\cal E}_{n}(\kappa)   \nonumber\\%\label{fano_1}\\
\leq & \frac{1}{n}h(\varepsilon_n) + \frac{1}{n}\varepsilon_n \log |{\cal M}^{(n)}|  \label{fano_11} \\
& + \sup_{{\cal P}_{n}(\kappa)} \frac{1}{n}I^P(A^n\rar  Y^n), \; \forall g^n(\cdot)  \in {\cal E}_{n}(\kappa) \subseteq {\cal P}_{n}(\kappa)\nonumber
\end{align}
where to arrive in (\ref{fano_11})  we have used,   $I^g(M^{(n)}; Y^n)=\sum_{t=1}^n I^g(M^{(n)};Y_t| Y^{t-1})= \sum_{t=1}^n I^g(M^{(n)}, A^t;Y_t| Y^{t-1}), A_t=g_t(M^{(n)}, A^{t-1}, Y^{t-1})$, and the data processing inequality $I^g(M^{(n)}, A^t;Y_t| Y^{t-1})\leq I^g( A^t;Y_t| Y^{t-1}$ due to the MC, $M^{(n)} \leftrightarrow (A^{t},Y^{t-1}) \leftrightarrow Y_t, \forall t \in {\mathbb Z}_+^n$ (a consequence of code definition), and taking supremum over the larger set ${\cal P}_{n}(\kappa)$. From the above inequalities, 
 taking the supremum over ${\cal E}_{n}(\kappa)$, and   $\liminf_{n \longrightarrow \infty}$, using the fact that as, $n \longrightarrow \infty$, then  $\epsilon_n \longrightarrow 0$ and  $\frac{1}{n}h(\varepsilon_n)  \longrightarrow 0$, we deduce the stated inequalities. \textbf{(2) Achievability theorem:} The conditions are sufficient to apply the asymptotic equipartition property and random code generation, similar to  \cite[Theorem~4.5.1]{ihara1993}.
\end{proof}

%\begin{proof} 
%Assumptions~\ref{ass:ati} are sufficient  to  show (\ref{CCIS_8_BC_D_n_a}), via the   converse coding theorem  \cite{ihara1993,cover-pombra1989}, and the direct coding theorem using random coding arguments \cite{ihara1993,cover-pombra1989,kourtellaris-charalambousIT2015_Part_1}
%%
%%
%%prove the  channel coding theorems of noisy channels 
%%
%%
%% it follows that any achievable rate $R$ satisfies, $R \leq C_{FB}(\kappa)$ .  Second, by the direct coding theorem, using random coding arguments \cite{ihara1993,cover-pombra1989}, it follows that any rate $R < C_{FB}(\kappa)$ is achievable. 
%\end{proof}
%

\ \

 Sufficient conditions for existence of a maximum element in ${\cal P}_n(\kappa)$ of the   optimization problem 
 $C_{FB,n}(\kappa)$ are given in \cite{charalambous-stavrou2013aa} using weak convergence of probability measures for complete, separable metric spaces. 
 
% Further, \cite{charalambous-stavrou2013aa} also established that the payoff $I^P(A^n\rar Y^n)$ is concave over the space of distributions ${\cal P}_n(\kappa)$. This further implies that 
%  $C_{FB,n}(\kappa) $ is  concave, non-decreasing   in $\kappa \in (\kappa_{n,min}, \infty)\subseteq [0,\infty)$  \cite{charalambous-stavrou2013aa,charalambous-kourtellaris-tziortzis:SICON-2024}.

\ \

\begin{remark} (On Theorem~\ref{thm_capa}) In Theorem~\ref{thm_capa}, $I(A^n\rar Y^n)$  is  a nonlinear functional   of the conditional PMs    ${\bf P}_{Y_t|Y^{t-1}, A^t}^P$, ${\bf P}_{Y_t|Y^{t-1}}^{{ P}}$, $\forall t\in {\mathbb Z}_+^n$, that depend on the  randomized strategies  $P(\cdot|\cdot)\in {\cal P}_n(\kappa)$ and the a posteriori PMs  ${\bf P}_t^P(dx_t\big|a^{t-1}, y^{t-1}), {\bf P}_{t-1}^P(da^{t-1}|y^{t-1}),~  t \in {\mathbb Z}_+^n$. 
\end{remark}

 \section{Meta Dynamic Programming with Multiple Information States}
 \label{sect:meta-dp}
In this section, we develop DP recursions for the information-theoretic characterization of $C_{FB,n}(\kappa)$ in Theorem~\ref{thm_capa}.  We wish to identify sufficient statistics for the conditioning data $(A^{t-1}, Y^{t-1})$ appearing in the strategy $P_t(\cdot |A^{t-1}, Y^{t-1}), \forall t$. We proceed in several steps, by first 
introducing some assumptions.

\begin{assumptions}(Absolute Continuity)
\label{ass-1}
For abstract spaces, $Q_{t}( \cdot|x_t, a_t)\in {\cal M}({\mathbb Y}_{t})$  is  absolutely continuous w. r.t. the Lebesgue measure, namely,  there exists a  probability density function (PDF)  
 %  $ $s_{t+1}(x_{t+1}|x_t,  a_t),$  
   $q_{t}(y_t|x_t, a_t)$ such that 
\begin{align}
&Q_{t}(dy_t|x_t,  a_t)=q_{t}(y_t|x_t,a_t)dy_{t}, \; \forall t. \label{abs-con_1}
\end{align} 
For denumerable spaces (endowed with the discrete topology),   $S_{t+1}(\cdot |x_t, a_t)\in {\cal M}({\mathbb X}_{t+1})$, $Q_{t}( \cdot|x_t, a_t)\in {\cal M}({\mathbb Y}_{t})$  are replaced by transition probabilities  $s_{t+1}(x_{t+1}|x_t,  a_{t})\in {\cal M}({\mathbb X}_{t+1}),  q_{t}(y_t|x_t,  a_t)\in {\cal M}({\mathbb Y}_{t}), \forall t$.   
\end{assumptions}
 
 {\bf Step \#1:} In Theorem~\ref{theorem:aposteriori},  we show that 
 $\{{\bf P}^P_{t}(dx_{t}|A^{t-1}, Y^{t-1})\big| t\in  T_{+}^n\}$, which defines $ \{{\bf P}_{Y_t|Y^{t-1}, A^t}^P\big| t\in T_{+}^n\}$ in (\ref{cd_c1_c1})  
is an information state, it satisfies a recursion, and this recursion is Markov, which means it does not depend on $P(\cdot|\cdot) \in {\cal P}_n(\kappa)$.  
 
 \begin{theorem} (Recursion of  first a  posteriori PM) \label{theorem:aposteriori}
Consider the POMDP described in Section \ref{sec:2.b} with randomized strategies ${\cal P}_n(\kappa)$. Suppose  Assumptions~\ref{ass-1} hold\footnote{For denumerable spaces  integrals $\int_{{\mathbb X}_t}$  are replaced by sums $\sum_{x\in {\mathbb X}_t}$ and PMs by PMFs.}. Then the following statements hold.\\
{\bf (1)} The a posteriori PM  ${\bf P}^P_{t}(dx_{t}|A^{t-1}, Y^{t-1})$ is independent of  $P(\cdot|\cdot)$,  
\begin{align*}
 {\bf P}^P_{t}(dx_{t}|A^{t-1}, Y^{t-1})=\Xi_{t}[A^{t-1}, Y^{t-1}](dx_{t}),    \: \: \forall  P(\cdot|\cdot), \forall t
\end{align*} 
and  $\Xi_{t}[a^{t-1},y^{t-1}](dx_{t}) =\xi_{t}[a^{t-1}, y^{t-1}](dx_{t})$  satisfies the recursion
 \begin{align}
&\xi_{t+1}[a^{t}, y^{t}](dx_{t+1}) ={\bf T}_{t+1}\big(y_{i}, a_{t},\xi_{t}[a^{t-1}, y^{t-1}](\cdot)\big) (dx_{t+1})    \nonumber \\
&=\frac{{\bf T}_{t+1}^{un}\big(y_{i}, a_{t}, \xi_{t}[a^{t-1}, y^{t-1}](\cdot)\big)(dx_{t+1})    }{\int_{ {\mathbb X}_{t+1}}  {\bf T}_{t+1}^{un}\big(y_{t}, a_{t}, \xi_{t}[a^{t-1}, y^{t-1}](\cdot)\big)(dx_{t+1})}, \label{apost_1} \\
 & \xi_{1}(dx_{1})= {\bf P}(dx_1),  \label{apost_2} \\
&{\bf T}_{t+1}^{un}\big(y_{t}, a_{t},\xi_t[a^{t-1},y^{t-1}](\cdot)\big)(dx_{t+1}) \nonumber \\
&\tri\int_{{\mathbb X}_{t}} S_{t+1}(dx_{t+1}\big|y_{t}, x_{t},a_{t}) \nonumber \\
&\hst  . Q_{t}(dy_{t}\big|x_{t}, a_{t})    \xi_{t}[a^{t-1}, y^{t-1}](dx_t) .   
\end{align}
{\bf (2)} The  process $\{\Xi_{t}\big| t \in {\mathbb Z}_+^{n}\}$ that satisfies  the recursion  (\ref{apost_1})
 is an information state, i.e., a  Markov process.\\
{\bf (3)} The conditional PM ${\bf P}_{Y_t|Y^{t-1}, A^t}^P$ is independent of  $P(\cdot|\cdot)$,  
\begin{align}
&{\bf P}_{Y_t| Y^{t-1}=y^{t-1}, A^t=a^t}^P = {\bf P}_{t}(dy_{t}\big|a_t, \xi_t) \\
&= \int_{{\mathbb X}_t} Q_t(dy_t|x_t, a_t)    \; \xi_{t}[a^{t-1}, y^{t-1}](dx_t) , \; \forall P(\cdot|\cdot), \;  \forall t .
\label{cd_c1}  
\end{align}
\end{theorem}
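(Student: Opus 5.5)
Every statement follows from one Bayes computation on the joint law of $(X_t, A^{t-1}, Y^{t-1})$, combined with the Markov chain (\ref{NDM-6-mc}). I will argue by induction on $t$. For $t=1$ the conditioning is trivial and $\xi_1={\bf P}(dx_1)$, as in (\ref{apost_2}).

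For the inductive step, I will write the joint law of $(X_{t+1},X_t,A_t,Y_t)$ given $(A^{t-1},Y^{t-1})=(a^{t-1},y^{t-1})$ as the product
\begin{align*}
S_{t+1}(dx_{t+1}|x_t,a_t)\, Q_t(dy_t|x_t,a_t)\, P_t(da_t|a^{t-1},y^{t-1})\, \xi_t[a^{t-1},y^{t-1}](dx_t).
\end{align*}
Three facts justify this factorization. First, $A_t$ is conditionally independent of $X^t$ given $(A^{t-1},Y^{t-1})$ by (\ref{NDM-6-mc}), so the strategy kernel multiplies the posterior of $X_t$. Second, $Y_t$ depends only on $(X_t,A_t)$ through $Q_t$. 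Third, $X_{t+1}$ depends only on $(X_t,A_t)$ through $S_{t+1}$; if $S_{t+1}$ is allowed to depend on $y_t$, as the displayed formula suggests, the same argument goes through unchanged.

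Conditioning on $(A_t,Y_t)=(a_t,y_t)$ means dividing by the marginal of $(a_t,y_t)$. Using Assumptions~\ref{ass-1}, I replace $Q_t$ by its density $q_t$ so that the Radon–Nikodym derivatives are explicit. The factor $P_t(da_t|a^{t-1},y^{t-1})$ then appears in both the numerator and the denominator and cancels, leaving exactly the normalized ratio ${\bf T}^{un}_{t+1}/\int {\bf T}^{un}_{t+1}$ in (\ref{apost_1}). Because the induction hypothesis says $\xi_t$ is independent of $P$, so is $\xi_{t+1}$, which proves (1).

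I expect the cancellation to be the main obstacle: it must be justified almost surely with respect to the law induced by $P$. I would state the identity for $P$-a.e. $(a^t,y^t)$, using the disintegration theorem on Polish spaces to identify the conditional laws. I would then extend $\xi_{t+1}$ to all arguments by the formula itself, with a fixed choice of measure wherever the denominator vanishes. Formally, I would verify the claimed kernel by integrating it against test sets $B\times C\times D$ in the $(a^t,y^t)$ coordinates and applying Fubini.

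For (3), I integrate the factorization over $x_{t+1}$ and condition on $A_t=a_t$. By (\ref{NDM-6-mc}) again, the law of $X_t$ given $(a^t,y^{t-1})$ is still $\xi_t$, which yields (\ref{cd_c1}).

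For (2), I compute the conditional law of $\Xi_{t+1}$ given $(\Xi^t,A^t)$, or given the full past $(A^t,Y^{t-1})$. Since $\Xi_{t+1}={\bf T}_{t+1}(Y_t,A_t,\Xi_t)$ and, by (3), $Y_t$ given $(A^t,Y^{t-1})$ has law ${\bf P}_t(dy_t|A_t,\Xi_t)$, we get
\begin{align*}
{\bf P}(\Xi_{t+1}\in \Gamma|A^t,Y^{t-1}) = \int {\bf 1}_\Gamma\big({\bf T}_{t+1}(y,A_t,\Xi_t)\big)\,{\bf P}_t(dy|A_t,\Xi_t).
\end{align*}
The right-hand side is a function of $(\Xi_t,A_t)$ only and does not involve $P$. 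Hence $\{\Xi_t\}$ is a controlled Markov process with a strategy-independent transition kernel, that is, an information state. Measurability of ${\bf T}_{t+1}$ in $(y,a,\xi)$ on ${\cal M}({\mathbb X}_t)$ with the weak topology follows from the measurability of the kernels $S_{t+1}$ and $Q_t$.
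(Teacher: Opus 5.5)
Your proposal is correct and follows essentially the same route as the paper. The paper also forms the Bayes/Radon--Nikodym ratio of ${\bf P}^P_{t+1}(dx_{t+1},da_t,dy_t|a^{t-1},y^{t-1})$ so that the strategy kernel cancels by (\ref{NDM-6-mc}), computes the conditional law of $\Xi_{t+1}$ given the past for (2), and reconditions on $x_t$ for (3); your added care with null sets, kernel versions, the vanishing denominator, and measurability of ${\bf T}_{t+1}$ fills in details the paper only sketches.
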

\begin{proof} {\bf (1)}  
We prove the recursion by considering the Radon-Nikodym derivative (RND),
\begin{align}
&{\bf P}_{t+1}^P(dx_{t+1}\big| a^t, y^{t})\nonumber\\
&~~=\frac{{\bf P}_{t+1}^P(dx_{t+1},da_t, dy_t\big| a^{t-1}, y^{t-1})}{\int_{{\mathbb X}_{t+1}}{\bf P}_{t+1}^P(dx_{t+1},da_t, dy_t\big| a^{t-1}, y^{t-1})} \label{1st-is}
\end{align}
Because we have used the RND for abstract spaces, the only requirement for the derivation of the recursion is (\ref{abs-con_1}). Using the POMDP conditional PMs and the MC (\ref{NDM-6-mc}), we obtain the recursion,  and verify its independence on  $P(\cdot|\cdot)$ (i.e., the right-hand side of the recursion  (\ref{apost_1})  does not depend on strategies $P(\cdot|\cdot) \in {\cal P}_{n}(\kappa)$).
{\bf (2)}  We verify the   Markov property by considering the conditional PM of  $\Xi_{t+1}\tri{\bf P}_{t+1}(dx_{t+1}\big|A^{t}, Y^{t})$ conditioned on $\{\Xi^t, A^t\}$, and verifying it depends on $\{\Xi_{t}, A_t\}$, hence the Markov property holds.   
% Finally, since the right hand side of the recursion  (\ref{apost_1})  does not depend on    strategies $P(\cdot|\cdot) \in {\cal P}_{n}(\kappa)$, then    ${\bf P}^P_{t+1}(dx_{t+1}|a^{t}, y^{t})=  {\bf P}_{t+1}(dx_{t+1}|a^{t}, y^{t})$ does not depend on $P(\cdot|\cdot)$,   $ \forall t \in {\mathbb Z}_+^n$. 
{\bf (3)} Notice that $(a^t, y^{t-1})$ specifies $\xi_t[a^{t-1},y^{t-1}]$ and     by 
 reconditioning, 
\begin{align}
 &{\bf P}_t^P(dy_t\big|y^{t-1}, a^t)= {\bf P}_{t}^P(dy_{t}\big|a^t,  y^{t-1}, \xi_t)\nonumber\\
 &=\int_{{\mathbb X}_t} {\bf P}_t^P(dy_{t}\big|x_t, a^t, y^{t-1}, \xi_t)   {\bf P}_t^P(dx_t\big|   a^t, y^{t-1}, \xi_t)   \nonumber \\
&\sr{(a)}{=}\int_{{\mathbb X}_t} Q_t(dy_{t}\big|x_t, a_t)   \xi_{t}[a^{t-1}, y^{t-1}](dx_t) \nonumber% \label{if-1a}
  \end{align}
  where  $(a)$ follows from the definition of  the POMDP and the MC 
  ${\bf P}_t^P(dx_t\big|   a^t,  y^{t-1}, \xi_t)  ={\bf P}_t^P(dx_t\big| a^{t-1},  y^{t-1}, \xi_t) \equiv   \xi_{t}$. This implies ${\bf P}_t^P(dy_t\big|y^{t-1}, a^t)={\bf P}_{t}^P(dy_{t}\big|a^t,  y^{t-1}, \xi_t)={\bf P}_{t}(dy_{t}\big|a_t, \xi_t)$, establishing (\ref{cd_c1}).  
\end{proof}

{\bf Step \#2:} %Conditional PMs on the Space of  Conditional PMs (Information States).} 
We now focus on determining a tractable expression for $ \{{\bf P}_{Y_t|Y^{t-1}}^P\big| t\in  T_{+}^n\}$, which  appears in   $I^P(A^n \rar Y^n)$. 

We introduce the measurable space  $ ({\cal M}({\cal M}({\mathbb X}_t)),  {\cal B}({\cal M}({\mathbb X}_t)))$, where ${\cal M}({\cal M}({\mathbb X}_t))$ denotes the space of probability measures on the space of probability measures ${\cal M}({\mathbb X}_t)$, and 
${\cal B}({\cal M}({\mathbb X}_t))$ is the Borel $\sigma-$algebra generated by the weak topology on the space of probability measures  ${\cal M}({\mathbb X}_t)$, i.e., w.r.t.    the Prohorov metric. 

In Theorem~\ref{thm:sec-a-d}, we express $ {\bf P}_{Y_t|Y^{t-1}}^P$ w.r.t.  the a posteriori   PM
$\widehat{\Xi}[Y^{t-1}](d\xi_t)\tri {\bf P}_t^P(d \xi_t |Y^{t-1})\in   {\cal M}(   {\cal M}({\mathbb X}_t))$, i.e., the PM defined on  the space of PMs  $\Xi_t \equiv \Xi_t[A^{t-1}, Y^{t-1})(dx_t)\in  {\cal M}({\mathbb X}_t)$,   $\forall  t\in  T_{+}^n$. That is,  $\widehat{\xi}[y^{t-1}](d\xi_t)\tri {\bf P}_t^P(d\xi_t  |y^{t-1})$  and $\widehat{\xi}[\cdot](\cdot)\in {\cal K}({\cal M}(   {\cal M}({\mathbb X}_t))|{\mathbb Y}^{t-1})$ is a stochastic kernel.
% defined on the state space ${\mathbb S}_t \tri {\cal M}({\mathbb X}_t), \forall t$.  
  
We also   show that   $\widehat{\xi}[y^{t-1}](d\xi_t)\in   {\cal M}(   {\cal M}({\mathbb X}_t)),\forall  t\in T_{+}^n $  satisfies a recursion that depends on  strategies $\pi_t[\xi_t, y^{t-1}](da_t)\tri {\bf  P}_{t}(da_{t}\big|\xi_t, y^{t-1})$ instead of $P(da_t|a^{t-1}, y^{t-1}), \forall t$.   This means, the dependence of strategies $P(da_t|a^{t-1}, y^{t-1})$ on $a^{t-1}$ is via the information state $ \xi_t[a^{t-1},y^{t-1}](dx_t), \forall t$.

\begin{theorem} (A  posteriori PMs on the space of a  posteriori PMs) 
\label{thm:sec-a-d}
Consider the POMDP described in Section \ref{sec:2.b} with randomized strategies ${\cal P}_n(\kappa)$ and suppose  Assumptions~\ref{ass-1} hold. Moreover, consider the a posteriori PMs 
%$\xi_{t}[a^{t-1}, y^{t-1}](dx_{t})\tri {\bf P}_{t}(dx_{t}|a^{t-1}, y^{t-1})$, 
$\Xi_{t}[A^{t-1}, Y^{t-1}](dx_t) \in {\cal M}({\mathbb X}_t), \forall t$,  and define the a posteriori PMs
\begin{align*}
& \widehat{\Xi}_{t}^P[Y^{t-1}](d\xi_t)\tri  {\bf P}_{t}^P(d \xi_t| Y^{t-1})\in {\cal M}({\cal M}({\mathbb X}_t)), \; \forall t, \\
& \Pi_t[\Xi_t, Y^{t-1}](da_t)\tri {\bf  P}_{t}(da_{t}\big|\Xi_t, Y^{t-1})\in {\cal M}({\mathbb A}_t), \forall t
\end{align*}
Then, the following statements hold.\\
{\bf (1)} The  a posteriori PM $\big\{\widehat{\Xi}_{t}^P[Y^{t-1}](d{\xi}_{t})\big| t \in  T_{+}^n\big\}$  depends on $\big\{\Pi[\Xi_t, Y^{t-1}](da_t)\big| t \in  T_{+}^n\big\}$ (and not $\big\{P(da_t|A^{t-1}, Y^{t-1})\big| t \in  T_{+}^n\big\}$),
\begin{align}
\widehat{\Xi}_{t}^P[Y^{t-1}](d\xi_t)=\widehat{\Xi}_{t}^\Pi[Y^{t-1}](d{\xi}_{t}),~ \forall t\nonumber
\end{align}
and $\widehat{\Xi}_{t}^\Pi[Y^{t-1}](d{\xi_t})=\widehat{\xi}_{t}^\Pi[y^{t-1}](d{\xi_t})$  satisfies the recursion
 \begin{align}
&\widehat{\xi}_{t+1}^\pi[y^{t}](d{\xi}_{t+1}) \hst \: \forall t \in T_{+}^{n-1} \label{apost_1_pre}    \\
&=\widehat{\bf T}_{t+1}\big(y_t, \pi_{t}(\cdot\big|, \cdot, y^{t-1}) , \widehat{\xi}_{t}^\pi[y^{t-1}](\cdot) \big)(d\xi_{t+1})     \nonumber \\
&=\frac{\widehat{\bf T}_{t+1}^{un}\big(y_t, \pi_{t}(\cdot\big|, \cdot, y^{t-1})  ,  \widehat{\xi}_{t}^\pi[y^{t-1}](\cdot)\big)(d\xi_{t+1})    }{\int_{ {\cal M}({\mathbb X}_{t+1}) } \widehat{\bf T}_{t+1}^{un}\big(y_t, \pi_{t}(\cdot\big|, \cdot, y^{t-1})  ,  \widehat{\xi}_{t}^\pi[y^{t-1}](\cdot)\big)(d\xi_{t+1})     }, \nonumber \\
& {\bf P}_1^\pi(d\xi_1)=\delta_{{\bf P}_{X_1}}(d\xi_1),   
\nonumber
\end{align}
such that
\begin{align}
&\widehat{\bf T}_{t+1}^{un}\big(y_t, \pi_{t}(\cdot\big|\cdot, y^{t-1}) , \widehat{\xi}_{t}^\pi[y^{t-1}](\cdot)\big)(d{\xi}_{t+1})  \nonumber \\ 
 & \tri\int_{{\cal M}({\mathbb X}_t) \times {\mathbb A}_t} { \delta}_{ {\bf T}_{t+1}(y_{i}, a_{t}, \xi_t) }(d\xi_{t+1}){\bf P}_{t}(dy_{t}\big|a_t, \xi_t)\nonumber \\
& .  \pi_{t}(da_{t}\big|\xi_t, y^{t-1})  \widehat{\xi}_{t}^\pi[y^{t-1}](d\xi_t)  
\end{align}
where ${\bf P}_{t}(dy_{t}\big|a_t, \xi_t)$ is given by (\ref{cd_c1}), and where $\delta_{ {\bf T}_{t+1}(y_{t}, a_{t}, \xi_{t}) }(d\xi_{t+1})$  (resp.  $\delta_{{\bf P}_{X_1}}(d\xi_1)$) is the Dirac PM concentrated at ${\bf T}_{t+1}(y_{t}, a_{t}, \xi_{t}) $   (resp. ${\bf P}_{X_1}$).\\
%Moreover, $\widehat{\Xi}_{t+1}^P[y^{t}](\cdot)=\widehat{\Xi}_{t+1}^\Pi[y^{t}](d{\xi}_{t+1}), \forall t$, i.e., depend on $\Pi[\cdot](\cdot)$.\\
{\bf (2)}  The conditional PM given by  (\ref{cd_o1}) satisfies,  ${\bf P}_{Y_t|Y^{t-1}}^P={\bf P}_t^\Pi(dy_t|Y^{t-1})$,     and is expressed as 
\begin{align}
&{\bf P}_t^\pi(dy_t|y^{t-1}) =\int_{ {\cal M}({\mathbb X}_t)}  \Big(\int_{ {\mathbb A}_t  \times {\mathbb X}_t}   {\bf  P}_t(dy_t\big|  a_t, \xi_t)  \nonumber \\
& . \pi_t(da_t|\xi_t, y^{t-1}) 
\xi_t(dx_t)\Big)
  \widehat{\xi}_{t}^\pi[y^{t-1}](d\xi_t),  \forall t \in {\mathbb Z}_+^n. \label{cd_o1-n}
\end{align}
i.e.,  $\xi_t \in {\cal M}({\mathbb X}_t)$ is a realization averaged w.r.t. $\widehat{\xi}_{t}(d\xi_t)$.
\end{theorem}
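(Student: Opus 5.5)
The plan is to imitate the proof of Theorem~\ref{theorem:aposteriori}, one level up. We work with the joint conditional law of $(\Xi_t, A_t, Y_t, \Xi_{t+1})$ given $Y^{t-1}$, treating $\Xi_t$ as a random element of ${\cal M}({\mathbb X}_t)$ equipped with the Prohorov Borel $\sigma$-algebra.

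The first task is to check that this is legitimate. By Theorem~\ref{theorem:aposteriori}, $\Xi_t=\xi_t[A^{t-1},Y^{t-1}]$ is a measurable function of $(A^{t-1},Y^{t-1})$. This holds because ${\bf T}_{t+1}$ is jointly measurable in $(y_t,a_t,\xi_t)$, which follows from Assumptions~\ref{ass-1} via the density form of the normalization. Hence $\widehat{\Xi}^P_t[Y^{t-1}]$ is a well-defined regular conditional distribution on the Polish space ${\cal M}({\mathbb X}_t)$, and $\Pi_t[\Xi_t,Y^{t-1}]$ exists by disintegration.

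Next we factor the joint law. By the chain rule,
\[
{\bf P}^P(d\xi_{t+1},dy_t,da_t,d\xi_t|y^{t-1})={\bf P}(d\xi_{t+1}|\xi_t,a_t,y_t,y^{t-1})\,{\bf P}(dy_t|a_t,\xi_t,y^{t-1})\,{\bf P}(da_t|\xi_t,y^{t-1})\,{\bf P}(d\xi_t|y^{t-1}).
\]
Each factor is identified in turn.
\begin{itemize}
\item The first factor is $\delta_{{\bf T}_{t+1}(y_t,a_t,\xi_t)}$, because $\Xi_{t+1}={\bf T}_{t+1}(Y_t,A_t,\Xi_t)$ deterministically by (\ref{apost_1}).
\item The second factor equals ${\bf P}_t(dy_t|a_t,\xi_t)$ of (\ref{cd_c1}). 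To see this, condition further on $A^{t-1}$. Part (3) of Theorem~\ref{theorem:aposteriori} gives ${\bf P}(dy_t|a^t,y^{t-1})={\bf P}_t(dy_t|a_t,\xi_t[a^{t-1},y^{t-1}])$, which depends on $a^{t-1}$ only through $\xi_t$. Integrating out $a^{t-1}$ given $(\xi_t,a_t,y^{t-1})$ therefore leaves it unchanged.
\item The third factor is $\pi_t$ by definition.
\item The fourth factor is $\widehat{\xi}_t^\pi[y^{t-1}]$.
\end{itemize}
Integrating over $(\xi_t,a_t)$ gives the unnormalized measure $\widehat{\bf T}^{un}_{t+1}$ on $(\xi_{t+1},y_t)$. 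Dividing by its $\xi_{t+1}$-marginal, which is the Radon--Nikodym derivative in $dy_t$ and exists by Assumptions~\ref{ass-1}, yields recursion (\ref{apost_1_pre}).

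The initial condition $\delta_{{\bf P}_{X_1}}$ holds since $\Xi_1={\bf P}_{X_1}$ is deterministic. By induction on $t$, the right-hand side involves $P$ only through $\{\pi_s\}_{s\le t}$, so $\widehat{\Xi}^P_t=\widehat{\Xi}^\Pi_t$. This proves part (1).

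For part (2), marginalize the same factorization over $\xi_{t+1}$. This gives ${\bf P}(dy_t|y^{t-1})=\int {\bf P}_t(dy_t|a_t,\xi_t)\,\pi_t(da_t|\xi_t,y^{t-1})\,\widehat{\xi}^\pi_t[y^{t-1}](d\xi_t)$. The inner integral over $\xi_t(dx_t)$ in (\ref{cd_o1-n}) is just the expansion of ${\bf P}_t(dy_t|a_t,\xi_t)$ via (\ref{cd_c1}). We then check that this agrees with (\ref{cd_o1}) by grouping $a^{t-1}$ according to the value of $\xi_t[a^{t-1},y^{t-1}]$, that is, by pushing ${\bf P}^P_{t-1}(da^{t-1}|y^{t-1})$ forward under $a^{t-1}\mapsto\xi_t$ and disintegrating $P_t(da_t|a^{t-1},y^{t-1})$ along this map, which produces $\pi_t$.

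The main obstacle is the second-factor step together with this pushforward/disintegration. We must justify that conditioning on the coarser $\sigma$-algebra generated by $(\Xi_t,A_t,Y^{t-1})$, rather than by $(A^t,Y^{t-1})$, preserves the kernel ${\bf P}_t(dy_t|a_t,\xi_t)$. We would handle this by the tower property: a conditional law that is measurable with respect to the coarser $\sigma$-algebra is also the conditional law given it. We would also need to verify measurability of $\xi\mapsto{\bf P}_t(\cdot|a,\xi)$ under the weak topology, which holds because $\xi\mapsto\int f\,d\xi$ is Borel for bounded measurable $f$.
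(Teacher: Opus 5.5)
Your proposal is correct and follows essentially the same route as the paper. It uses the same chain-rule factorization of ${\bf P}^P(d\xi_{t+1},dy_t,da_t,d\xi_t|y^{t-1})$ into the Dirac, ${\bf P}_t(dy_t|a_t,\xi_t)$, $\pi_t$ and $\widehat{\xi}_t^\pi$ factors, normalizes via the Radon--Nikodym derivative, argues by induction that the dependence is only on $\Pi$, and obtains part (2) by marginalization. The remaining differences are minor: you justify the second factor via Theorem~\ref{theorem:aposteriori}(3) and the tower property, whereas the paper re-conditions on $x_t$ and uses ${\bf P}_t^P(dx_t|a_t,\xi_t,y^{t-1})=\xi_t$; you also spell out the measurability requirements that the paper leaves implicit.
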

\begin{proof} {\bf (1)} The conditional PM  of $\Xi_{t+1}$  given  $Y^t=y^t$ is determined by  using the RND,  
%\begin{align}
%&{\bf P}_{t+1}^P(d\xi_{t+1}\big| y^{t})=\frac{{\bf P}_{t+1}^P(d\xi_{t+1},dy_t\big| y^{t-1})}{\int_{{\cal M}({\mathbb X}_{t+1})}{\bf P}_{t+1}^P(d\xi_{t+1},dy_{t}\big|y^{t-1} )}, \label{der-rc_1_pr} \\
%& {\bf P}_{t+1}^P(d\xi_{t+1}, dy_t\big|dy^{t-1})\\
%&= \int_{{\cal M}({\mathbb X}_t)\times {\mathbb A}_t}{\bf P}_{t+1}^P(d\xi_{t+1},  da_{t},dy_t\big| dy^{t-1})\nonumber \\
%&=\int {\bf P}_{t+1}(d\xi_{t+1}\big|\xi_t, a_t, y^t) {\bf P}_{t}^P(dy_{t}\big|a_t, \xi_t, y^{t-1})\nonumber \\
%& .  {\bf  P}_{t}^P(da_{t}\big|\xi_t, y^{t-1})  {\bf P}_{t}^P(d\xi_t|y^{t-1}) 
%\label{der-rc_2_pr-new1}\\
%&=\int \delta_{ {\bf T}_{t+1}^{nor}(y_{i}, a_{t}, \xi_t) }(d\xi_{t+1}){\bf P}_{t}(dy_{t}\big|a_t, \xi_t)\nonumber \\
%& .  \Pi_{t}(da_{t}\big|\xi_t, y^{t-1})  \widehat{\Xi}_{t}^P[y^{t-1}](d\xi_t) 
%\label{der-rc_2_pr-new2}
%\end{align}
\begin{align}
&{\bf P}_{t+1}^P(d\xi_{t+1}\big| y^{t})=\frac{{\bf P}_{t+1}^P(d\xi_{t+1},dy_t\big| y^{t-1})}{\int_{{\cal M}({\mathbb X}_{t+1})}{\bf P}_{t+1}^P(d\xi_{t+1},dy_{t}\big|y^{t-1} )}, \label{der-rc_1_pr} \\
& {\bf P}_{t+1}^P(d\xi_{t+1}, dy_t\big|dy^{t-1})\\
&= \int_{{\cal M}({\mathbb X}_t)\times {\mathbb A}_t}{\bf P}_{t+1}^P(d\xi_{t+1}, d\xi_t, da_{t},dy_t\big| dy^{t-1})\nonumber \\
&=\int_{{\cal M}({\mathbb X}_t)\times {\mathbb A}_t} {\bf P}_{t+1}(d\xi_{t+1}\big|\xi_t, a_t, y^t) {\bf P}_{t}^P(dy_{t}\big|a_t, \xi_t, y^{t-1})\nonumber \\
& .  {\bf  P}_{t}^P(da_{t}\big|\xi_t, y^{t-1})  {\bf P}_{t}^P(d\xi_t|y^{t-1}) 
\label{der-rc_2_pr-new1}
\end{align}
\begin{align}
&=\int_{{\cal M}({\mathbb X}_t)\times {\mathbb A}_t} \delta_{ {\bf T}_{t+1}(y_{i}, a_{t}, \xi_t) }(d\xi_{t+1}){\bf P}_{t}(dy_{t}\big|a_t, \xi_t)\nonumber \\
& .  \pi_{t}(da_{t}\big|\xi_t, y^{t-1})  \widehat{\xi}_{t}^P[y^{t-1}](d\xi_t) 
\label{der-rc_2_pr-new2}
\end{align}
where (\ref{der-rc_2_pr-new2})  is  due  to the  recursion of Theorem~\ref{theorem:aposteriori},  which implies ${\bf P}_{t+1}(d\xi_{t+1}\big|\xi_t, a_t, y^t)$ is the Dirac PM,  the  re-conditioning property, which implies 
\begin{align}
 & {\bf P}_{t}^P(dy_{t}\big|a_t, \xi_t, y^{t-1})=\int_{{\mathbb X}_t} {\bf P}_t^P(dy_{t}\big|x_t, a_t, \xi_t, y^{t-1})   \nonumber \\
&. {\bf P}_t^P(dx_t\big|   a_t, \xi_t, y^{t-1})   \nonumber \\
&\sr{(a)}{=}\int_{{\mathbb X}_t} Q_t(dy_{t}\big|x_t, a_t)   \xi_{t}[a^{t-1}, y^{t-1}](dx_t)  \label{if-1}
  \end{align}
  where  $(a)$ follows from  the definition of  the POMDP and  
  ${\bf P}_t^P(dx_t\big|   a_t, \xi_t, y^{t-1})  ={\bf P}_t^P(dx_t\big|  \xi_t, y^{t-1}) =  \xi_{t}$. 
From the above we obtain the recursion.  Using induction, we can show 
 the recursion satisfies   $\widehat{\xi}_{t+1}^P[y^{t}](\cdot)=\widehat{\xi}_{t+1}^\pi[y^{t}](d{\xi}_{t+1}), \forall t$, i.e., depend on $\pi[\cdot](\cdot)$.
{\bf (2)}  Consider
\begin{align}
&{\bf P}_t^P(dy_t\big|y^{t-1}) =  \int_{ {\cal M}({\mathbb X}_t)\times  {\mathbb A}_t  \times {\mathbb X}_t}   {\bf  P}_t^P(dy_t\big| x_t, a_t, \xi_t, y^{t-1})  \nonumber \\
& . {\bf  P}_t^P(da_t|x_t, \xi_t, y^{t-1}) 
{\bf P}_t^P(dx_t\big|\xi_t, y^{t-1})
  {\bf P}_{t}^P(d\xi_t\big|y^{t-1}).  
 \label{cd_o1-nn}
\end{align}
By substituting into (\ref{cd_o1-nn}), ${\bf  P}_t^P(dy_t\big| x_t, a_t, \xi_t, y^{t-1})=Q_t(dy_t\big| x_t, a_t)$, ${\bf  P}_t^P(da_t|x_t, \xi_t, y^{t-1}) ={\bf P}_t(da_t|\xi_t, y^{t-1}) \tri \pi_t[\xi_t, y^{t-1}](da_t)$, ${\bf P}_t^P(dx_t\big|\xi_t, y^{t-1})=\xi_t,$  ${\bf P}_{t}^P(d\xi_t\big|y^{t-1})=\widehat{\xi}_t^\pi[y^{t-1}](d\xi_t)$,  we obtain (\ref{cd_o1-n}).
\end{proof}

{\bf Step \#3:} % Equivalent Expressions of Payoff   and Average Cost.}  
In  the next theorem,  we  invoke Theorem~\ref{theorem:aposteriori} and Theorem~\ref{thm:sec-a-d} to express the payoff $I^P(A^n \rar Y^n)$ and average cost in terms of $\widehat{\Xi}_{t}^\Pi[Y^{t-1}](\cdot) \in {\cal M}({\cal M}({\mathbb X}_t))$ and the strategies $\Pi_t[\xi_t, y^{t-1}](da_t)\tri {\bf  P}_{t}(\cdot\big|\xi_t, y^{t-1}) \in {\cal M}({\mathbb A}_t), \forall t \in T_{+}^n$. Then we derive an alternative equivalent information-theoretic characterization of $C_{FB,n}(\kappa)$. 

\begin{theorem} (Equivalent payoff and average cost) 
\label{thm:equiv}
Consider the POMDP described in Section \ref{sec:2.b} with randomized strategies ${\cal P}_n(\kappa)$ and suppose  Assumptions~\ref{ass-1} hold. \\
Then the payoff $I^P(A^n \rar Y^n)$  and average cost  $\frac{1}{n} {\bf E}^{ P}\big\{ c_n(X^n,A^n)\big\}$  in 
Theorem~\ref{thm_capa} is expressed as follows. \\
{\bf Payoff:}
\begin{align}
&I^P(A^n \rar  Y^n)  =   \sum_{t=1}^nI^{\Pi}(A_t, \Xi_t;Y_t|Y^{t-1}) \label{di_general_aa} \\
 &= \sum_{t=1}^n{\bf E}^\Pi \Big\{  \log \Big(\frac{ {\bf P}_t(dY_t| A_t,\Xi_t)}{{\bf P}_t^{^\Pi}(dY_t|Y^{t-1})}\Big)    \Big\}
 \label{di_general-eqv}
\end{align}
where ${\bf P}_t(dy_t\big|A_t, \Xi_t)={\bf P}_{Y_t|A_t, \Xi_t}$ given by (\ref{cd_c1})  and   ${\bf P}_t^\Pi(dy_t\big|Y^{t-1})={\bf P}_{Y_t|Y^{t-1}}^\Pi$ given by (\ref{cd_o1-n}).\\
{\bf Average Cost Constraint:}
\begin{align}
&{\bf E}^{ P}\Big\{ c_n(X^n,A^n)\Big\}={\bf E}^{ P}\Big\{\sum_{t=1}^n  \gamma_t(X_t,A_t)\Big\}  \label{av-cost}  \\
&={\bf E}^{\Pi}\Big\{\sum_{t=1}^n \int_{{\mathbb X}_t \times {\mathbb A}_t} \gamma_t(x_t,a_t)  \Pi_t(da_t\big|\Xi_t, Y^{t-1}) \nonumber \\
&\hst .\Xi_t[A^{t-1}, Y^{t-1}](dx_t) \Big\} \leq \kappa . \label{av-cost-eqv}
\end{align}
{\bf Equivalent Optimization Problem $C_{FB,n}(\kappa)$:} The optimization problem $C_{FB,n}(\kappa)$ is equivalent to 
\begin{align}
&C_{FB,n}^\Pi(\kappa)=\sup_{{\cal P}_n^{\Pi}(\kappa)} \sum_{t=1}^n{\bf E}^P \left\{  \log \Big(\frac{ {\bf P}_t(dY_t| A_t,\Xi_t)}{{\bf P}_t^{^\Pi}(dY_t|Y^{t-1})}\Big)    \right\},  \label{cap_general-eqv}
\end{align}
where
\begin{align}
{\cal P}_n^\Pi(\kappa)=\Big\{\Pi_t(da_t\big|\Xi_t, Y^{t-1}), \forall t\in  T_{+}^n \big| \: \mbox{(\ref{av-cost-eqv}) holds}\Big\}. \nonumber 
\end{align}
\end{theorem}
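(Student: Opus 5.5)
The plan is to rewrite each term of the directed information (\ref{di_general}) by substituting the two conditional PMs that Theorem~\ref{theorem:aposteriori} and Theorem~\ref{thm:sec-a-d} have already identified. Then handle the cost by conditioning, and finally argue that optimizing over $P$ is the same as optimizing over $\Pi$.

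For the \textbf{payoff}, fix $t$. By Theorem~\ref{theorem:aposteriori}(3), the numerator ${\bf P}^P_{Y_t|Y^{t-1},A^t}$ equals ${\bf P}_t(dy_t|A_t,\Xi_t)$, where $\Xi_t=\Xi_t[A^{t-1},Y^{t-1}]$. By Theorem~\ref{thm:sec-a-d}(2), the denominator ${\bf P}^P_{Y_t|Y^{t-1}}$ equals ${\bf P}^\Pi_t(dy_t|Y^{t-1})$, given by (\ref{cd_o1-n}). Substituting both into (\ref{di_general}) gives (\ref{di_general-eqv}) term by term.

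To identify each term with $I^\Pi(A_t,\Xi_t;Y_t|Y^{t-1})$, I will check that ${\bf P}_{Y_t|Y^{t-1},A_t,\Xi_t}={\bf P}_t(dy_t|A_t,\Xi_t)$. This holds because $\Xi_t$ is a measurable function of $(A^{t-1},Y^{t-1})$. Conditioning the identity of Theorem~\ref{theorem:aposteriori}(3) down to the coarser $\sigma$-algebra generated by $(Y^{t-1},A_t,\Xi_t)$ therefore leaves it unchanged, since its right-hand side is already measurable with respect to that $\sigma$-algebra (tower property). Hence $I^P(A^t;Y_t|Y^{t-1})=I(A_t,\Xi_t;Y_t|Y^{t-1})$.

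It remains to show that the joint law of $(Y^{t},A_t,\Xi_t)$, and hence the expectation, depends on $P$ only through $\Pi$. I will prove this by induction on $t$, using the factorization in the proof of Theorem~\ref{thm:sec-a-d}(1). The joint law of $(\Xi_t,A_t,Y_t)$ given $y^{t-1}$ is
\[
{\bf P}_t(dy_t|a_t,\xi_t)\,\pi_t(da_t|\xi_t,y^{t-1})\,\widehat{\xi}^\pi_t[y^{t-1}](d\xi_t),
\]
and ${\bf P}^\pi_t(dy_t|y^{t-1})$ generates $Y^{t-1}$ recursively.

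For the \textbf{cost}, I will use the tower property with respect to $(A^{t-1},Y^{t-1})$. Write ${\bf E}^P\{\gamma_t(X_t,A_t)\}$ as the expectation of the integral of $\gamma_t$ against the conditional law of $(X_t,A_t)$ given $(A^{t-1},Y^{t-1})$. By the MC (\ref{NDM-6-mc}), $X_t$ and $A_t$ are conditionally independent given $(A^{t-1},Y^{t-1})$. The factor for $X_t$ is therefore $\Xi_t[A^{t-1},Y^{t-1}](dx_t)$, and the factor for $A_t$ is $P_t(da_t|A^{t-1},Y^{t-1})$. Conditioning further onto $(\Xi_t,Y^{t-1})$ then replaces $P_t$ by $\Pi_t$, which gives (\ref{av-cost-eqv}). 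Nonnegativity of $\gamma_t$ justifies all the interchanges.

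For the \textbf{equivalence of the optimization problems}, there are two directions. First, every $P\in{\cal P}_n(\kappa)$ induces $\Pi_t(\cdot|\Xi_t,Y^{t-1}):={\bf P}^P(A_t\in\cdot|\Xi_t,Y^{t-1})$, which lies in ${\cal P}^\Pi_n(\kappa)$ and gives the same payoff and cost; hence $C_{FB,n}\le C^\Pi_{FB,n}$. Conversely, each $\Pi$ defines a strategy $P_t(da_t|a^{t-1},y^{t-1}):=\Pi_t(da_t|\xi_t[a^{t-1},y^{t-1}],y^{t-1})$. This $P$ lies in ${\cal P}_n(\kappa)$, since it satisfies (\ref{NDM-6-mc}) and the cost constraint, and it induces $\Pi$ back; hence the reverse inequality.

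The main obstacle is the step asserting that ${\bf P}^P(A_t|X_t,\Xi_t,Y^{t-1})=\Pi_t$, that is, that $X_t$ is conditionally independent of $A_t$ given $(\Xi_t,Y^{t-1})$ and not merely given $(A^{t-1},Y^{t-1})$. In general $A_t$ may depend on $A^{t-1}$ beyond $\Xi_t$. I would first prove it via ${\bf P}(dx_t|a_t,\xi_t,y^{t-1})=\xi_t$. To do this, integrate the product $\xi_t[a^{t-1},y^{t-1}](dx_t)\,P_t(da_t|a^{t-1},y^{t-1})$ over the set $\{a^{t-1}:\xi_t[a^{t-1},y^{t-1}]=\xi_t\}$; because the factor $\xi_t(dx_t)$ is constant on that set, it factors out of the integral.
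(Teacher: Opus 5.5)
Your proposal is correct and follows the paper's own route. For the payoff you substitute (\ref{cd_c1}) and (\ref{cd_o1-n}) into (\ref{di_general}), and for the cost you re-condition on $(A^{t-1},Y^{t-1})$ with the MC (\ref{NDM-6-mc}); your two-way reduction between ${\cal P}_n(\kappa)$ and ${\cal P}_n^\Pi(\kappa)$ and the tower-property identification with $I^\Pi(A_t,\Xi_t;Y_t|Y^{t-1})$ only make explicit what the paper compresses into ``putting these together.'' Your ``main obstacle'' is settled more cleanly, without level sets of $\xi_t[\cdot,y^{t-1}]$ (which may be null in abstract spaces), by ${\bf E}^P[f(X_t)\,|\,A_t,\Xi_t,Y^{t-1}]={\bf E}^P\big[{\bf E}^P[f(X_t)\,|\,A^t,Y^{t-1}]\,\big|\,A_t,\Xi_t,Y^{t-1}\big]=\int_{{\mathbb X}_t} f\,d\Xi_t$ for bounded measurable $f$.
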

\begin{proof} Substituting into the directed information  (\ref{di_general}), ${\bf P}_{Y_t|Y^{t-1}, A^t}^P={\bf P}_{Y_t|A_t, \Xi_t}={\bf P}_t(dy_t\big|A_t, \Xi_t)$ given by (\ref{cd_c1})  and   ${\bf P}_{Y_t|Y^{t-1}}^P={\bf P}_t^\Pi(dy_t\big|Y^{t-1})$ given by (\ref{cd_o1-n}) we obtain  (\ref{di_general-eqv}). The equivalent average cost (\ref{av-cost-eqv}) follows by re-conditioning and using  MC  (\ref{NDM-6-mc}). Putting these together, we 
obtain (\ref{cap_general-eqv}). 
% ${\bf E}^{ P}\big\{ \gamma_t(X_t,A_t)\big\} ={\bf E}^{ P}\big\{{\bf E}^{ P}\big\{ \gamma_t(X_t,A_t) \big|A^{t-1}, Y^{t-1}\big\}  \Big\}$,  where 
%\begin{align}
%& {\bf E}^{ P}\big\{ \gamma_t(X_t,A_t) \big|A^{t-1}, Y^{t-1}\big\} \\
%&=\int  \gamma_t(x_t,a_t) {\bf P}_t^P(dx_t, da_t\big| A^{t-1}, Y^{t-1})\\
%&=\int  \gamma_t(x_t,a_t) {\bf P}_t^P(dx_t, da_t\big| A^{t-1}, Y^{t-1}), \\
%&{\bf P}_t^P(da_t\big| X_t, A^{t-1}, Y^{t-1}){\bf P}_t^P(dx_t\big| A^{t-1}, Y^{t-1})\\
%&{\bf P}_t^P(da_t\big| X_t, A^{t-1}, Y^{t-1})\sr{(a)}{=}
%{P}_t(da_t\big| A^{t-1}, Y^{t-1})
%\end{align}
%and where $(a)$ follows by MC  (\ref{NDM-6-mc-mc}). 
\end{proof}

\ \

{\bf Step \#4:}  In  Theorem~\ref{thm:eqv-dp1},  we present a more advanced version of DP referred to as {\it meta  DP} that characterizes  $C_{FB,n}^\Pi(\kappa)$.

First, we recall that  $C_{FB,n}^\Pi(\kappa)$ is a  convex optimization problem \cite{charalambous-stavrou2013aa}. Hence, to develop  the meta DP approach we  introduce the  associated  value process
using 
 the Lagrangian of  the constraint ${\cal P}_n^\Pi(\kappa)$.

\begin{definition}(Meta  Value Process)
\label{def:hlvp}
Consider  \eqref{cap_general-eqv} in Theorem~\ref{thm:equiv}.  Then, the value process or  optimal cost-to-go ${\cal V}_{t}(\cdot): {\mathbb Y}^{t-1} \rar [0,\infty)$, over  $\{t, \ldots, n\}$  when the optimal strategy $\Pi(\cdot\big|\cdot) =\Pi^o(\cdot\big|)   \in {\cal P}_{t-1}^\Pi(\kappa)$ is  used over $\{1,2, \ldots, t-1\}$, conditioned  on any  $Y^{t-1}=y^{t-1}$   is   
\begin{align}
 & {\cal V}_t^\lambda(y^{t-1}) \tri   \inf_{\{(\Xi_j, y^{j-1}) \mapsto \Pi_j(\cdot|\Xi_j, y^{j-1})  \in {\cal M}_{j}({\mathbb A}_j)\}_{j=t}^n }  {J}_{t,n}^{\Pi,\lambda}(y^{t-1}),    
  \label{opt-ctg-g}\\
&{J}_{t,n}^{\Pi,\lambda}(y^{t-1}) 
 \tri   {\bf  E}^{\Pi} \Big\{  \sum_{j=t}^n I^{\Pi}(A_j, \Xi_j;Y_j|Y^{j-1})\nonumber \\
 &-\lambda \sum_{j=t}^n  \gamma_j(X_j,A_j) \Big|Y^{t-1}=y^{t-1} \Big\},  \hst \forall t \in  T_{+}^n \label{opt-ctg-pbp-tc} 
% \\
% &=\inf_{  \gamma^{k}\in {\cal P}_{t,n}^{-k}}  {\mathbb E}^{^{\gamma^k, \gamma^{-k,o}}}    \Big\{ \nonumber \\
%&\hst
%\sum_{j=t}^{n} \int_{{\mathbb X}_j\times {\mathbb L}_j^{-k}}   \ell(j, x_j,\gamma_j^k(I_j^k),  \gamma_j^{-k,o}(\Delta_j^{(K)},\lambda
%_j^{-k}))\nonumber \\
%&. {\bf P}_j^{{\gamma^k, \gamma^{-k,o}}} (dx_j, d\lambda_j^{-k}\; \big|I_j^k) \Big| \; I_t^k= i_t^k  \Big\}. \nonumber 
\end{align}
where $\lambda \in [0,\infty)$ is the Lagrangian associated with the average cost constraint (if there is no cost then $\lambda=0$). 
\end{definition}

Next, we state the main theorem that gives the meta DP equations and their consequences.

\begin{theorem}(Meta DP-necessary and sufficient conditions)
\label{thm:eqv-dp1} Consider the Definition \ref{def:hlvp}. Then the following statements hold.\\
{\bf (1)} The value process of Definition~\ref{def:hlvp} is expressed  w.r.t.  to the high level state $\widehat{\xi}_{t}^\pi[y^{t-1}](\cdot) \in {\cal M}({\cal M}({\mathbb X}_t))$,   as follows 
\begin{align}
& {\cal V}_{n}^\lambda(y^{n-1}) = \sup_{(\xi_n, y^{n-1}) \mapsto   \pi_n(\cdot|\xi_n, y^{n-1})\in {\cal M}({\mathbb A}_n)}    \biggl\{    \nonumber\\  
& \int_{{\cal M}({\mathbb X}_n)} \bigg(\int_{    {\mathbb Y}_n\times {\mathbb A}_n}  \log \left(\frac{ {\bf P}_n(dy_n| a_n,\xi_n)}{{\bf P}_n^{\pi}(dy_n|y^{n-1})}\right)   \nonumber \\
& .{\bf P}_n(dy_n\big| a_n, \xi_n) \pi_n(da_n\big|\xi_n, y^{n-1}) \bigg)-\lambda  \int_{{\mathbb X}_n \times {\mathbb A}_n} \gamma_n(x_n,a_n)  \nonumber \\
  &. \pi_n(da_n\big|\xi_n, y^{n-1}) \xi_n(dx_n)  \bigg)
  \widehat{\xi}_{n}^{\pi}[y^{n-1}](d\xi_n) \Bigg\},   \label{v-pbp-a} \\
& {\cal V}_{t}^\lambda(y^{t-1})= \sup_{(\xi_j, y^{j-1}) \mapsto     \pi_j(\cdot|\xi_j,y^{j-1})\in {\cal M}({\mathbb A}_j), j=t, \ldots, n}  \bigg\{ \nonumber\\
&\hso \sum_{j=t}^n   \int_{{\cal M}({\mathbb X}_j)} \bigg(  \int_{    {\mathbb Y}_j\times {\mathbb A}_j}  \log \left(\frac{ {\bf P}_j(dy_j| a_j,\xi_j)}{{\bf P}_j^{\pi}(dy_j|y^{j-1})}\right)   \nonumber \\
& .{\bf P}_j(dy_j\big| a_j, \xi_j) \pi_j(da_j\big|\xi_j, y^{j-1}) )-\lambda  \int_{{\mathbb X}_j\times {\mathbb A}_j} \gamma_j(x_j,a_j)  \nonumber \\
  &. \pi_j(da_j\big|\xi_j, y^{j-1}) \xi_j(dx_j)  \bigg)
  \widehat{\xi}_{j}^{\pi}[y^{j-1}](d\xi_j \bigg\}, \hso \forall t \in  T_{+}^{n-1}.   \label{v-pbp}  
\end{align}
{\bf (2) Necessary Conditions of Optimality:}  Suppose an  optimal  $P(\cdot|\cdot )\in {\cal P}_n(\kappa)$  exists for  $C_{FB,n}(\kappa)$. \\
(2.1)  The value process 
${ V}_t^\lambda(Y^{t-1}),  \forall  t\in T_{+}^n $ 
   satisfies   the   DP   equations,  (\ref{dp_1_nnn_1}),   (\ref{dp_1_na}):  $\forall \; Y^{t-1}=y^{t-1}$, $\forall \;  \widehat{\Xi}_{t}^\Pi[Y^{t-1}](\cdot) =\widehat{\xi}_{t}[y^{t-1}](\cdot) \in {\cal M}({\cal M}({\mathbb X}_t))$,
\begin{align}
&{\cal V}_n^\lambda(y^{t-1})=\mbox{(\ref{v-pbp-a})}  \label{dp_1_nnn_1}\\
&{\cal V}_t^\lambda(y^{t-1}) = \sup_{(\xi_t, y^{t-1}) \mapsto     \pi_t(\cdot|\xi_t,y^{t-1})\in {\cal M}({\mathbb A}_t)}    \bigg\{ \nonumber \\
&  \int_{{\cal M}({\mathbb X}_t)} \bigg(  \int_{    {\mathbb Y}_t\times {\mathbb A}_t}  \log \left(\frac{ {\bf P}_t(dy_t| a_t,\xi_t)}{{\bf P}_t^{\pi}(dy_t|y^{t-1})}\right)   \nonumber \\
& .{\bf P}_t(dy_t\big| a_t, \xi_t) \pi_t(da_t\big|\xi_t, y^{t-1}) )-\lambda  \int_{{\mathbb X}_t\times {\mathbb A}_t} \gamma_t(x_t,a_t)  \nonumber \\
  &. \pi_t(da_t\big|\xi_t, y^{t-1}) \xi_t(dx_t)  \bigg)
  \widehat{\xi}_{t}[y^{t-1}](d\xi_t)\nonumber \\
  & +  \int_{{\mathbb Y}_{t}}  {\cal  V}_{t+1}^\lambda(y^t)  {\bf P}_t^{\pi}(dy_t\big|y^{t-1})    \bigg\}, \hso \forall t \in T_{+}^{n-1}
 \label{dp_1_na-1}
\end{align}
where in  (\ref{dp_1_na-1}) 
 the   conditional   the conditional  PM ${\bf P}_t^\pi(dy_t\big|y^{t-1})$ is given by  (\ref{cd_o1-n}). \\
(2.2) The supremum in the DP recursions, occurs at $\pi_t^{o}[\xi_t, y^{t-1}](da_t)=\pi_t^{o,sep}[\xi_t, \widehat{\xi}_t](da_t), \forall t$, that is, strategies are separated (they depend on $y^{t-1}$ through $\widehat{\xi}_t$). \\
{\bf (3) Verification-Sufficient  Conditions of  Optimality:} \\
(3.1)  If the value process  ${\cal  V}_t^\lambda(\cdot),  \forall t \in T_{+}^n$  satisfies the DP equations  \eqref{v-pbp-a}, \eqref{v-pbp},  then  (almost surely), 
%  $\forall k \in {\mathbb Z}_+^K$,
\begin{align}
 {\cal V}_t^\lambda(Y^{t-1})    \leq J_{t,n}^{\Pi,\lambda}(Y^{t-1}),  \hso   \forall \Pi, \:  \forall t  \label{in_1_g} 
\end{align}
and the resulting supremum of the DP equations, $\pi_t^o[\xi_t, y^{t-1}](\cdot)\in {\cal M}({\mathbb A}_t), \forall t$, is  optimal.\\
(3.2) Suppose $\pi_t^{o}[\xi_t, y^{t-1}](da_t)=\pi_t^{,sep,o}[\xi_t, \widehat{\xi}_t](da_t)$
 is a strategy 
such that,  for all $\{y^{t-1}, \widehat{\xi}_t\}$ achieves the supremum in the DP equations \eqref{v-pbp-a}, \eqref{v-pbp} for $t=1, \ldots, n$. Then $\pi_t^{sep,o}[\xi_t, \widehat{\xi}_t](da_t)$ is  optimal and ${\cal  V}_t^\lambda(Y^{t-1}) = J_{t,n}^{\Pi^o,\lambda}(Y^{t-1}), \forall t$  (almost surely).
\end{theorem}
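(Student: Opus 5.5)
The plan is to treat the problem of Theorem~\ref{thm:equiv} as a fully observed stochastic control problem. Its observed data is $Y^{t-1}$, its state is the meta-state $\widehat{\xi}_t^\pi[y^{t-1}]\in{\cal M}({\cal M}({\mathbb X}_t))$, and its ``action'' at time $t$ is the whole kernel $\pi_t(\cdot|\cdot,y^{t-1})\in{\cal K}({\mathbb A}_t|{\cal M}({\mathbb X}_t))$. The argument then follows the standard DP template: a tower-property decomposition of the Lagrangian pay-off, backward induction, and a verification argument.

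\textbf{Part (1).} I first rewrite $J_{t,n}^{\Pi,\lambda}(y^{t-1})$ in (\ref{opt-ctg-pbp-tc}) term by term. For each $j\ge t$ I condition on $Y^{j-1}$ and use the tower property. The $j$th directed-information term, given $Y^{j-1}=y^{j-1}$, is an integral of $\log\big({\bf P}_j(dy_j|a_j,\xi_j)/{\bf P}_j^\pi(dy_j|y^{j-1})\big)$. The integrating measure is ${\bf P}_j(dy_j|a_j,\xi_j)\pi_j(da_j|\xi_j,y^{j-1})\widehat{\xi}_j^\pi[y^{j-1}](d\xi_j)$; this uses Theorem~\ref{theorem:aposteriori}(3) and Theorem~\ref{thm:sec-a-d}(2). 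The cost term becomes $\int\gamma_j\,\pi_j\,\xi_j\,\widehat{\xi}_j$, following the re-conditioning in (\ref{av-cost-eqv}). Because $\widehat{\xi}_j^\pi$ obeys the Markov recursion (\ref{apost_1_pre}), which is driven only by $(y_{j-1},\pi_{j-1},\widehat{\xi}_{j-1})$, the whole conditional pay-off is a functional of $\widehat{\xi}_t[y^{t-1}]$ and $\{\pi_j\}_{j\ge t}$. This gives (\ref{v-pbp-a}) and (\ref{v-pbp}). The expectation over $y_j$, $j>t$, is carried out w.r.t. the product of the ${\bf P}_j^\pi(dy_j|y^{j-1})$ in (\ref{cd_o1-n}).

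\textbf{Part (2).} Assume an optimum exists. I split $J_{t,n}$ into the stage-$t$ term plus $\int J_{t+1,n}(y^t){\bf P}_t^\pi(dy_t|y^{t-1})$. The stage-$t$ term depends only on $\pi_t$ and $\widehat{\xi}_t$. The continuation depends on $\pi_t$ only through $\widehat{\xi}_{t+1}=\widehat{\bf T}_{t+1}(y_t,\pi_t,\widehat{\xi}_t)$ and through ${\bf P}_t^\pi$. Bellman's principle (optimal tail strategies remain optimal for the tail problem) then gives (\ref{dp_1_na-1}). The formal argument is an interchange of $\sup$ and integral, via a measurable selection / $\epsilon$-optimal strategy argument.

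For (2.2), I show by backward induction that ${\cal V}_t^\lambda(y^{t-1})=\widehat{V}_t(\widehat{\xi}_t[y^{t-1}])$ for some function $\widehat{V}_t$ on ${\cal M}({\cal M}({\mathbb X}_t))$. At time $n$ the objective depends on $y^{n-1}$ only through $\widehat{\xi}_n$ and $\pi_n$. Assuming the claim at $t+1$, the right side of (\ref{dp_1_na-1}) depends on $y^{t-1}$ only via $\widehat{\xi}_t$. Hence a maximizer can be chosen as $\pi_t^{sep}[\xi_t,\widehat{\xi}_t]$.

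\textbf{Part (3).} This is the usual verification argument. Take any $\Pi$ and any ${\cal V}$ satisfying the DP equations. At each $t$, ${\cal V}_t$ is no smaller than the bracket evaluated at $\pi_t$. Iterating from $t$ to $n$ and using the tower property with the recursion of Theorem~\ref{thm:sec-a-d}, ${\cal V}_t$ dominates $J_{t,n}^{\Pi,\lambda}$ a.s. (the direction of (\ref{in_1_g}) as written should read $\ge$ for a supremum problem). Equality holds when $\pi^o$ attains each supremum, which proves optimality of $\pi^o$ and of the separated $\pi^{sep,o}$ in (3.2).

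\textbf{Main obstacle.} The hardest step is the non-additive, nonlinear pay-off. The denominator ${\bf P}_t^\pi(dy_t|y^{t-1})$ depends on the entire kernel $\pi_t$ and on $\widehat{\xi}_t$, not on a single action. The stage reward is therefore not an expectation of a per-state reward, and this is why the ``action'' must be the whole kernel and the state the distribution $\widehat{\xi}_t$ rather than $\xi_t$. I would handle this by proving explicitly that ${\bf P}_t^\pi$ and $\widehat{\xi}_{t+1}$ are measurable functionals of $(\widehat{\xi}_t,\pi_t,y_t)$. After that, the problem is a standard MDP on ${\cal M}({\cal M}({\mathbb X}_t))$, and measurable selection applies in the Polish setting.
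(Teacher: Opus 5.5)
Your proposal is correct and follows essentially the same route as the paper's proof: a conditional-expectation (tower-property) rewriting of $J_{t,n}^{\Pi,\lambda}$ for part (1), splitting the Lagrangian pay-off into the stage-$t$ term plus the continuation for (2.1), backward induction showing the value depends on $y^{t-1}$ only through $\widehat{\xi}_t$ for (2.2), and the standard verification argument for (3). Your version is more careful than the paper's sketch in two places: you evaluate the continuation at the updated meta-state $\widehat{\bf T}_{t+1}(y_t,\pi_t,\widehat{\xi}_t)$, which the paper only does in Theorem~\ref{thm:eqv-dp2}; and you correctly note that for the supremum problem (\ref{in_1_g}) must read ${\cal V}_t^\lambda \geq J_{t,n}^{\Pi,\lambda}$, the \emph{inf} in Definition~\ref{def:hlvp} being a typo.
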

\begin{proof} {\bf (1)} This is immediate from Definition \ref{def:hlvp}. {\bf (2.1)} This follows from Theorem~\ref{thm:equiv} and Definition~\ref{def:hlvp} by performing the conditional expectation, and breaking $\sum_{j=t}^n$ into the current and future cost. {\bf (2.2)} At time $t=n$, by (\ref{dp_1_nnn_1}), the supremum occurs at  $\pi_n^{o}[\xi_n, y^{n-1}](da_n)=\pi_n^{o,sep}[\xi_n, \widehat{\xi}_n](da_n)$. This implies (\ref{cd_o1-n}), ${\bf P}_n^{pi^o}(dy_n|y^{n-1})=
 {\bf P}_n^{\pi^{o,sep}}(da_n|\xi_n, \widehat{\xi}_n)$, and consequently   ${\cal V}_n^\lambda(y^{t-1})={V}_t^\lambda(\widehat{\xi}_n)$. Using induction we can also show $\pi_t^{o}[\xi_t, y^{t-1}](da_t)=\pi_t^{o,sep}[\xi_n, \widehat{\xi}_t](da_t), \forall t \in T_{+}^{n-1}$.
 {\bf (3)} This can be shown following variations of standard DP derivations \cite{kumar-varayia1986}. 
\end{proof}

By Theorem~\ref{thm:eqv-dp1}, {\bf (2.2)},  we can  obtain a significant simplification of the  meta  DP equations by considering the recursion of Theorem~\ref{thm:sec-a-d} corresponding to   fully  separated strategies, i.e., 
$\Pi_t^{sep}[\Xi_t, \widehat{\Xi}_t^{\Pi^{sep}}](da_t)\tri {\bf  P}_{t}(da_t\big|\Xi_t, \widehat{\Xi}_t^{\Pi^{sep}}) \in {\cal M}({\mathbb A}_t), \forall t \in T_{+}^n$. A sufficient (but not necessary) condition for such strategies is that the $\sigma-$algebra generated by $\{Y_t\big|t\in T_{+}^n\}$ is the same as that of $\{\widehat{\Xi}_t^\Pi\big|t\in T_{+}^n\}$.

\begin{theorem}(Meta DP with fully separated strategies-necessary and sufficient conditions)
\label{thm:eqv-dp2}
Consider Theorem~\ref{thm:eqv-dp1} and restrict the set of strategies $\Pi\in {\cal P}_n^\Pi(\kappa)$ to fully separated strategies $\Pi_t^{sep}[\Xi_t, \widehat{\Xi}_t^{\Pi^{sep}}](da_t) \in {\cal M}({\mathbb A}_t), \forall t \in T_{+}^n$.\\
{\bf (1)} The value process  ${\cal V}_t^\lambda(\cdot)$   of Theorem~\ref{thm:eqv-dp1}, given by  (\ref{v-pbp-a}),  (\ref{v-pbp}) satisfies, 
\begin{align}
{\cal V}_t^\lambda(Y^{t-1})={V}_t^\lambda(\widehat{\Xi}_t^{\Pi^{sep}})-a.s.,  \hso \forall t
%\forall y^{t-1}. 
\end{align}
with ${\bf P}_t^{\Pi}(dy_t|Y^{t-1})={\bf P}_t^{\Pi^{sep}}(dy_t|\widehat{\Xi}_t^{\Pi^{sep}}), \forall t$, and  
 the value function ${V}_t^\lambda(\cdot)$ satisfies  the DP equations $\forall \; \widehat{\Xi}_{t}^{\Pi^{sep}}[Y^{t-1}](\cdot)=\widehat{\xi}_{t}[y^{t-1}](\cdot) \in {\cal M}({\cal M}({\mathbb X}_t))$:
\begin{align}
& {V}_{n}^\lambda(\widehat{\xi}_n) =\mbox{(\ref{v-pbp-a})}  \label{dp_1_nnn_1-n}  \\
&{ V}_t^\lambda{\xi}_t) = \sup_{(\xi_t,y^{t-1}) \mapsto     \pi_t(\cdot|\xi_t,y^{t-1})\in {\cal M}({\mathbb A}_t)}    \bigg\{ \nonumber \\
&  \int_{{\cal M}({\mathbb X}_t)} \bigg(  \int_{    {\mathbb Y}_t\times {\mathbb A}_t}  \log \left(\frac{ {\bf P}_t(dy_t| a_t,\xi_t)}{{\bf P}_t^{\pi}(dy_t|\widehat{\xi}_t)}\right)   \nonumber \\
& .{\bf P}_t(dy_t\big| a_t, \xi_t) \pi_t(da_t\big|\xi_t,y^{t-1}) )-\lambda  \int_{{\mathbb X}_t\times {\mathbb A}_t} \gamma_t(x_t,a_t)  \nonumber \\
  &. \pi_t^{sep}(da_t\big|\xi_t,y^{t-1}) \xi_t(dx_t)  \bigg)
  \widehat{\xi}_{t}[y^{t-1}](d\xi_t)\nonumber \\
  & +  \int_{{\mathbb Y}_{t}}  {  V}_{t+1}^\lambda\big(\widehat{\bf T}_{t+1}(y_t, \pi_{t}(\cdot\big|\cdot, \widehat{\xi}_t) , \widehat{\xi}_{t}(\cdot)  \bigg) \nonumber \\
  &. {\bf P}_t^{\pi}(dy_t\big| \widehat{\xi}_t  )    \bigg\}, \hso \forall t \in T_{+}^{n-1}
 \label{dp_1_na}
\end{align}
where  the conditional  PM ${\bf P}_t^{\Pi^{sep}}(dy_t\big|\widehat{\xi}_t)$ is given by  (\ref{cd_o1-n}) with $\Pi_{t}(\cdot\big|\xi_t, y^{t-1})$ replaced by $  \Pi_{t}^{sep}(\cdot\big|\xi_t,  \widehat{\xi}_t), \forall t$. \\
{\bf (2) Verification-Sufficient  Conditions of  Optimality:} \\
(2.1)  If the value process  ${  V}_t^\lambda(\cdot),  \forall t \in {1,2,\ldots,n}$  satisfies the DP equation \eqref{dp_1_nnn_1-n}, \eqref{dp_1_na},  then  (almost surely), 
\begin{align}
 {V}_t^\lambda(\widehat{\Xi}_t^{\Pi^{sep}}))    \leq J_{t,n}^{\Pi,\lambda}(Y^{t-1}) ,  \;\forall \Pi(\cdot|\cdot), \:  \forall t  \label{in_1_g-a} 
\end{align}
and the resulting $\Pi^{sep}(\cdot|\cdot)$ is  optimal.\\
(2.2) Suppose $\Pi_t^{o,sep}[\xi_t, \widehat{\xi}_t](da_t)$
 is a strategy 
such that,  for all $\widehat{\xi}_t$ achieves   the supremum  in the DP eqns of part (1)  for $t=1, \ldots, n$. \\
Then $\Pi_t^{o,sep}[\xi_t, \widehat{\xi}_t](da_t)$ is optimal and ${ V}_t^\lambda(\widehat{\Xi}_t^{\Pi^{o, sep}}) = J_{t,n}^{\Pi^{o,sep},\lambda}(Y^{t-1}), \forall t$  (almost surely).
\end{theorem}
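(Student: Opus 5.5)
We prove Theorem~\ref{thm:eqv-dp2} by backward induction on $t=n,n-1,\ldots,1$. The key observation is that, once the strategies are restricted to fully separated ones $\Pi_t^{sep}[\Xi_t,\widehat{\Xi}_t^{\Pi^{sep}}](da_t)$, every ingredient of the per-stage reward in (\ref{v-pbp-a}), (\ref{v-pbp}) depends on $y^{t-1}$ only through $\widehat{\xi}_t$. The relevant ingredients are ${\bf P}_t(dy_t|a_t,\xi_t)$, the cost term, ${\bf P}_t^{\pi}(dy_t|y^{t-1})$, and the update $\widehat{\bf T}_{t+1}$.

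\textbf{Reduction of the conditional law and of the update.} We first use Theorem~\ref{theorem:aposteriori}(3): ${\bf P}_t(dy_t|a_t,\xi_t)$ does not depend on $y^{t-1}$ at all. We then substitute $\pi_t(\cdot|\xi_t,y^{t-1})=\pi_t^{sep}(\cdot|\xi_t,\widehat{\xi}_t)$ into (\ref{cd_o1-n}) of Theorem~\ref{thm:sec-a-d}(2). The integrand becomes a function of $(\xi_t,\widehat{\xi}_t)$, and it is integrated against $\widehat{\xi}_t(d\xi_t)$, which gives ${\bf P}_t^{\Pi}(dy_t|Y^{t-1})={\bf P}_t^{\Pi^{sep}}(dy_t|\widehat{\Xi}_t^{\Pi^{sep}})$ a.s.

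Similarly, in the recursion (\ref{apost_1_pre}) the unnormalized operator $\widehat{\bf T}^{un}_{t+1}$ involves only $\delta_{{\bf T}_{t+1}(y_t,a_t,\xi_t)}$, ${\bf P}_t(dy_t|a_t,\xi_t)$, $\pi_t^{sep}$ and $\widehat{\xi}_t$. Hence $\widehat{\xi}_{t+1}=\widehat{\bf T}_{t+1}(y_t,\pi_t^{sep}(\cdot|\cdot,\widehat{\xi}_t),\widehat{\xi}_t)$, so $\widehat{\Xi}_t^{\Pi^{sep}}$ is a controlled Markov process driven by $Y_t$, whose transition law depends only on $(\widehat{\xi}_t,\pi_t^{sep})$. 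This is the meta analogue of Theorem~\ref{theorem:aposteriori}(2).

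\textbf{Backward induction.} At $t=n$, both the integrand and the measure in (\ref{v-pbp-a}) depend on $y^{n-1}$ only through $\widehat{\xi}_n$. Hence the supremum over separated $\pi_n$ defines a function $V_n^\lambda(\widehat{\xi}_n)$, and ${\cal V}_n^\lambda(Y^{n-1})=V_n^\lambda(\widehat{\Xi}_n^{\Pi^{sep}})$ a.s.

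For the inductive step, assume ${\cal V}_{t+1}^\lambda(y^t)=V_{t+1}^\lambda(\widehat{\xi}_{t+1})$. Substitute this into the DP equation (\ref{dp_1_na-1}) of Theorem~\ref{thm:eqv-dp1}(2.1), and write $\widehat{\xi}_{t+1}$ via the update $\widehat{\bf T}_{t+1}$ from the previous paragraph. The integral against ${\bf P}_t^\pi(dy_t|y^{t-1})$ then becomes an integral against ${\bf P}_t^{\pi}(dy_t|\widehat{\xi}_t)$ of $V_{t+1}^\lambda(\widehat{\bf T}_{t+1}(y_t,\pi_t,\widehat{\xi}_t))$. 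Everything inside the supremum is then a functional of $(\widehat{\xi}_t,\pi_t^{sep})$, so the supremum defines $V_t^\lambda(\widehat{\xi}_t)$, which is exactly (\ref{dp_1_na}). This proves part (1).

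The main obstacle I expect is measurability. We need $\widehat{\xi}_t\mapsto V_t^\lambda(\widehat{\xi}_t)$ to be Borel on ${\cal M}({\cal M}({\mathbb X}_t))$ with the weak (Prohorov) topology, and the supremum over kernels $\pi_t^{sep}\in{\cal K}({\mathbb A}_t|{\cal M}({\mathbb X}_t)\times{\cal M}({\cal M}({\mathbb X}_t)))$ to admit measurable (or $\epsilon$-optimal) selectors. I would first try to establish continuity of $\widehat{\bf T}_{t+1}$ in $\widehat{\xi}_t$ under the density Assumption~\ref{ass-1}. Failing that, I would fall back on universally measurable selection, as is standard for POMDP belief-state DP.

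\textbf{Verification, part (2).} For (2.1), fix an arbitrary $\Pi$ and prove the inequality (\ref{in_1_g-a}) by backward induction. At each $t$, the per-stage reward plus $E[V_{t+1}^\lambda(\widehat{\Xi}_{t+1})|Y^{t-1}]$ under $\Pi$ is dominated by the supremum defining $V_t^\lambda$. Summing these telescoping inequalities and using the tower property, together with the fact that the directed-information terms $I^\Pi(A_j,\Xi_j;Y_j|Y^{j-1})$ decompose stagewise by Theorem~\ref{thm:equiv}, gives the comparison with $J^{\Pi,\lambda}_{t,n}$. The direction of the inequality has to be read consistently with the supremum convention used in (\ref{v-pbp-a}).

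For (2.2), if $\Pi^{o,sep}$ attains the supremum for every $\widehat{\xi}_t$, then each of these inequalities holds with equality along the trajectory generated by $\Pi^{o,sep}$. Hence $V_t^\lambda(\widehat{\Xi}_t^{\Pi^{o,sep}})=J^{\Pi^{o,sep},\lambda}_{t,n}(Y^{t-1})$ a.s., and $\Pi^{o,sep}$ is optimal, as in \cite{kumar-varayia1986}.
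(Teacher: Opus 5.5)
Your proposal is correct and follows essentially the same route as the paper, whose proof of this theorem simply defers to that of Theorem~\ref{thm:eqv-dp1}. That route is backward induction showing that, once strategies are separated, the stage reward, ${\bf P}_t^{\pi}(dy_t|\cdot)$ and the update $\widehat{\bf T}_{t+1}$ depend on $y^{t-1}$ only through $\widehat{\xi}_t$, so ${\cal V}_t^\lambda = V_t^\lambda(\widehat{\xi}_t)$, followed by the standard verification argument of \cite{kumar-varayia1986}. Your additional points go beyond what the paper addresses but do not alter the argument: the explicit Markov property of $\widehat{\Xi}_t^{\Pi^{sep}}$, measurable selection on ${\cal M}({\cal M}({\mathbb X}_t))$, and reading the inequality in (2.1) consistently with the supremum convention.
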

\begin{proof} The proof is similar  to  
Theorem~\ref{thm:eqv-dp1}.
\end{proof}

We conclude this section with the following remark.
\begin{remark} We relate $C_{FB,n}(\kappa)$ to the optimal cost of POMDPs subject to a rate constraint. 

{\bf Cost-Rate of POMDPs:}  Consider the cost-rate   problem,    % with the constraint and payoff interchanged, 
%defined by 
 \begin{align*}
\kappa_{n}(C) \triangleq   \inf_{{\cal P}_n, \: \mbox{s.t.} \:  \frac{1}{n} I^P(A^n \rar Y^n) \: \geq \: C} 
 {\bf  E}^P\big\{c_{n}(A^n, X^{n})\big\} \label{cap_fb_1_TC_n_n}
 \end{align*}
 where  $C \in [0, \infty]$. 
 Clearly, we can obtain analogous meta DP equations for the cost rate.   \\
 {\bf Relation to Optimal Cost of  POMDPs:}
 Let ${\cal E}_{n}^{D}$ denote the  restriction of  ${\cal P}_{n}$ to  deterministic strategies ${\cal E}_{n}^{D} \triangleq  \big\{ a_t= e_t(a^{{t-1}},y^{{t-1}}) | t \in  {\mathbb Z}_+^n \big\}$. Define the stochastic optimal control problems with  deterministic strategies,  
\begin{align}
 &J_{n}^{SC}(e^*)\tri  \inf_{ {\cal E}_{n}^{D}} 
 {\bf  E}^e \left\{c_{n}(A^n, X^n)\right\}.\nonumber%\label{threshold}
\end{align}
For strategies ${\cal E}_n^D$ we have $I^P(A^n \rar Y^n)=I^e(A^n \rar Y^n)=0$, because $\frac{ {\bf P}_{Y_t|Y^{t-1}, A^t}^P}{{\bf P}_{Y_t|Y^{t-1}}^P}=0$. Hence,
\begin{align*}
\frac{1}{n}\kappa_{n}(C)  \geq \frac{1}{n}\kappa_{n}(C)\big|_{C=0}=\frac{1}{n}J_{n}^{SC}(e^*) =\kappa_{n,\min}
\end{align*}
where $\frac{1}{n}J_{n}^{SC}(e^*)$ is the classical optimal control of POMDP.

For problem $\frac{1}{n}J_{n}^{SC}(e^*)$, it is easy to verify that $\widehat{\Xi}_t \triangleq \mathbb{P}\{\Xi_t \in d\xi \mid Y^{t-1}\}=\Xi_t, \forall t$, because strategies are deterministic, i.e., $A_t=e_t(A^{t-1}, Y^{t-1})\equiv e_t^{sep}(\Xi_t), \forall t$.
%i.e., 
%with strategies $e \in {\cal E}_n^D$ with zero signalling %$C=0$. 
\end{remark}
%\end{comment}

\section{Conclusion}
We formulated and analyzed the problem of simultaneous signaling and control of POMDPs through the notion of \emph{operational control CC capacity}, $C_{FB}(\kappa)$. We derived meta DP equations whose state space consists of conditional distributions
\[
\widehat{\Xi}_t \triangleq \mathbb{P}\{\Xi_t \in d\xi \mid Y^{t-1}\},
\]
defined over the space of conditional distributions
\[
\Xi_t \triangleq \mathbb{P}\{X_t \in dx \mid A^{t-1}, Y^{t-1}\}.
\]

The randomized strategies admit a separation structure,  expressed in terms of  $\{\Xi_t, \widehat{X}_t\}_{t=1}^n$. Both induced distributions evolve according to Markov recursions and therefore constitute information states. This new reformulation enables the development of numerical algorithms for solving the associated DP equations following similar ideas to \cite{he:2025a,he:2025b} and opens  avenues for computational analysis.

\bibliographystyle{IEEEtran}
\bibliography{string,references_capacity}

\end{document}